\newtheorem{theorem}{Theorem}
\newtheorem{lemma}[theorem]{Lemma}
\newtheorem{remark}{Remark}
\newtheorem{proposition}{Proposition}
\newtheorem{definition}{Definition}
\newcommand{\norm}[1]{\ensuremath{\left\| #1\right\|}}
\newcommand{\pbra}[1]{\ensuremath{\left( #1\right)}}
\newcommand{\sbra}[1]{\ensuremath{\left[ #1\right]}}
\newcommand{\cbra}[1]{\ensuremath{\left\{ #1\right\}}}
\newcommand{\abra}[1]{\ensuremath{\left< #1\right>}}
\newcommand{\oder}[2]{\ensuremath{\frac{d #1}{d #2}}}
\newcommand{\pder}[2]{\ensuremath{\frac{\partial #1}{\partial #2}}}
\begin{document}

\title{Learning Swarm Interaction Dynamics from Density Evolution}

\author{Christos N. Mavridis, \IEEEmembership{Member, IEEE}, 
Amoolya Tirumalai, and 
John S. Baras, \IEEEmembership{Life Fellow, IEEE}
\thanks{%
This work was supported in part by the 
Defense Advanced Research Projects Agency (DARPA) under Agreement No. HR00111990027 
and by ONR grant N00014-17-1-2622.}
\thanks{The authors are with the Department of Electrical and Computer Engineering 
and the Institute for Systems Research, 
University of Maryland, College Park, MD 20742 USA 
(e-mails: mavridis@umd.edu, ast256@umd.edu, baras@umd.edu).}
}

\maketitle

\begin{abstract}
We consider the problem of understanding 
the coordinated movements of biological or artificial swarms.
In this regard, we propose a learning scheme to estimate the 
coordination laws of the interacting agents 
from observations of the swarm's density over time.
We describe the dynamics of the swarm 
based on pairwise interactions according to a Cucker-Smale flocking model,
and express the swarm's density evolution as the solution to a system of 
mean-field hydrodynamic equations.
We propose a new family of 
parametric functions to model the pairwise interactions, 
which allows for the mean-field macroscopic system of integro-differential equations
to be efficiently solved as an augmented system of PDEs.
%
%
Finally, we incorporate the augmented system in an iterative optimization 
scheme to learn the dynamics of the interacting agents 
from observations of the swarm's density evolution over time.
The results of this work can 
offer an alternative approach to study how animal flocks coordinate, create 
new control schemes for large networked systems, and
serve as a central part of defense mechanisms against adversarial drone attacks.
%
\end{abstract}

\begin{IEEEkeywords}
Learning, 
Networks of autonomous agents,
Biological Networks,
Swarm interaction dynamics.
\end{IEEEkeywords}

\section{Introduction}
\label{Sec:Introduction}

\IEEEPARstart{T}{he} highly coordinated movements of animal flocks 
are among the most
fascinating phenomena to be found in nature, and understanding their 
dynamics and coordination laws has been the research focus for many scientists
over the last decades
\cite{okubo1986dynamical, reynolds_Boids_1987, cucker2007emergent, giardina2008collective, ballerini2008interaction, bajec2009organized}. 

Extracting the laws of interaction 
between agents of general networked systems finds applications 
in a wide range of fields, from 
power systems and chemical reaction networks, 
to social networks and UAV swarms
\cite{giardina2008collective, ballerini2008interaction, Matei2019, cucker2007emergent, mavridis2020detection, vicsek2010}.
%
Statistical 
\cite{lu_nonparametricInferenceOfInteractionLaws_2018},
and model-based
\cite{reynolds_Boids_1987, cucker2007emergent, Matei2019, mao2019nonlocal}
learning approaches have been 
used to learn the interaction rules between agents.
%
%
%
%
There are generally two broad approaches in modeling the underlying dynamics 
of ensembles of self-organizing agents: 
the microscopic particle models, described by ordinary 
or stochastic differential equations,
and the macroscopic continuum models, described by partial differential
equations (PDEs).
Agent-based models assume behavioral rules at the individual level, 
such as velocity alignment, attraction, and repulsion
\cite{reynolds_Boids_1987, cucker2007emergent, giardina2008collective, ballerini2008interaction}, while macroscopic models, consider large number of interacting agents,
approaching the mean-field limit.
These models typically consist of hydrodynamic PDEs 
defined on macroscopic quantities, such as the swarm's density 
\cite{carrillo2010particle, shvydkoy2017eulerian, mavridis2020semi}, and
have been studied for the analysis and control of artificial swarms,
mainly in robotic applications
\cite{sinigaglia2021density,elamvazhuthi2019mean,fornasier2014mean}.

%
Particle models have been mainly used in 
numerical simulations and learning methodologies
\cite{Matei2019, chen2016switching, mao2019nonlocal}.
Recently, Mao et al. in \cite{mao2019nonlocal}
modeled the interactions with respect to a fractional differential system of equations, 
and
Matei et al. in \cite{Matei2019} proposed an energy-based approach 
by modeling the network as a port-Hamiltonian system \cite{h2}.
However, useful real-life data of particle trajectories are difficult to extract 
and may require substantial memory and computation resources 
\cite{ballerini2008interaction,vicsek2010}. 
The experimental measurements, which usually involve digital imaging
or high-resolution GPS devices,
are difficult to acquire and are subject to artificially created noise
originating from both the sensors and the processing algorithms.
In \cite{ballerini2008interaction}, for example, stereometric and 
computer vision techniques have been used to measure long-time and long-distance
3D position trajectories of starling flocks, and in \cite{vicsek2010}, 
GPS devices were installed to homing pigeons flying in small flocks of no more than  
13 individuals.
%

On the other hand, 
useful approximations of the ensemble's density evolution 
can be easier to extract, often by applying 
simple morphological operators on vision-based recordings.
For this reason, we believe that developing learning algorithms based on the
macroscopic quantities can play a crucial role 
in the analysis of collective motion, 
and only remains inhibited due to computational expense;
%
the flocking dynamics 
can be non-local as well as nonlinear \cite{shvydkoy2017eulerian},
which results in a costly computation of 
the solution of the corresponding hydrodynamic equations
\cite{mao2019nonlocal, mavridis2020learning}.

\textbf{Contribution.} 
In this work, we introduce a modified Cucker-Smale model of 
non-local particle interaction for velocity consensus
\cite{cucker2007emergent, ha2009simple}
to efficiently solve the macroscopic hydrodynamic equations.
We propose a family of parametric interaction functions
which are shown to correspond to Green's functions associated with an
appropriately defined 
differential operator.
This allows for the transformation of the  
macroscopic hydrodynamic integro-differential equations into an 
augmented system of PDEs,
which, in turn, results in a speed-up in the computation of the non-local interaction terms.
We investigate the conditions under which time-asymptotic flocking 
is achieved, and 
%
utilize the computational advantages of the proposed methodology 
to construct an iterative optimization algorithm to learn the interaction function 
based on observations of the particle density evolution.  
Finally, we also investigate the advantages of incorporating the proposed interaction 
function model in learning algorithms based on particle trajectories (microscopic models).
%
%
The results of this work can 
be used to model and understand biological and artificial flocks 
with applications in the
control of large networked systems and artificial robotic swarms,
and in defensive mechanisms against adversarial swarm attacks.


\section{Mathematical Models and Notation}
\label{Sec:MathematicalModels}

In this section we introduce the notation that will be followed 
throughout the manuscript, define time-asymptotic flocking and
the Cucker-Smale particle dynamics, and
derive the mean-field macroscopic equations.

\subsection{The Cucker-Smale Model}

Consider an interacting system $\mathcal{G}$ of $N$ identical particles (representing
autonomous agents)
with unit mass in $\mathbb{R}^d$, $d \in \{1,2,3\}$. 
Let $x_i(t),\ v_i(t)\in\mathbb{R}^d$ represent the position and velocity 
of the $i^{th}$-particle at each time $t\geq 0$, respectively, for $1\leq i\leq N$.
Then the general Cucker-Smale system \cite{cucker2007emergent} 
is a dynamical system of $(2Nd)$ ODEs:
\begin{equation}
\begin{cases}
\oder{x_i}{t} &= v_i \\
\oder{v_i}{t} &= \frac{1}{N}\sum_{j=1}^{N}\psi(x_j,x_i)(v_j-v_i) 
\end{cases}
\label{eq:cs}
\end{equation}
where $x_i(0)$, and $v_i(0)$ are given for all $i = 1,\ldots,N$, 
and ${\psi:\mathbb{R}^d \times \mathbb{R}^d \rightarrow \mathbb{R}}$ 
represents the interaction function between each pair of particles.
%
We define the center of mass system $(x_c,v_c)$ of 
$\mathcal G = \cbra{(x_i,v_i)}_{i=1}^N$ as
\begin{equation}
x_c = \frac 1 N \sum_{i=1}^N x_i, \quad v_c = \frac 1 N \sum_{i=1}^N v_i
\end{equation}
We are interested in symmetric interaction functions $\psi(x,s)=\psi(s,x)$, 
in which case system (\ref{eq:cs}) implies
\begin{equation}
\oder{x_c}{t}=v_c,\quad \oder{v_c}{t}=0
\end{equation}
which yields a unique solution
\begin{equation}
x_c(t) = x_c(0) + t v_c(0),\ t\geq 0
\label{eq:center_of_mass}
\end{equation}

Under additional assumptions on $\psi$
(see Section \ref{sSec:Flocking}), system (\ref{eq:cs}) 
can be shown to converge to a velocity 
consensus, while preserving spatial coherence, 
a property that is known as time-asymptotic flocking, defined as follows:
\begin{definition}[Time-Asymptotic Flocking]
An $N-$body interacting system $\mathcal{G}=\cbra{(x_i,v_i)}_{i=1}^N$ 
exhibits time-asymptotic flocking with bounded fluctuation if and only if 
the following two relations hold:
\begin{itemize}
\item(Velocity alignment):  The velocity fluctuations approach zero asymptotically, i.e.
$$\lim_{t\rightarrow\infty}\sum_{i=1}^N \norm{v_i(t)-v_c(t)}^2=0$$ 
\item(Spatial coherence): The position fluctuations are uniformly bounded, i.e.
for some $0<\Lambda<\infty$,
$$\sup_{t \geq 0} \norm{x_i(t)-x_c(t)} < \Lambda,\ \forall i\in\cbra{1,\ldots,N}$$
\label{def:flocking}
\end{itemize}
\end{definition}

Throughout this article, we will be investigating flocking behaviors
and will be working with the fluctuation variables around the center of mass system, 
defined as:
\begin{equation}
(\hat x_i, \hat v_i):=(x_i-x_c, v_i-v_c)
\label{eq:fluctuations}
\end{equation}
which can be shown to satisfy the same Cucker-Smale 
dynamics described in (\ref{eq:cs}). 
We will take advantage of the spatial coherence of the flocking behavior, 
and define the position variables $\hat x_i$ 
in a compact support $D:= \cbra{x\in\mathbb{R}^d:\|x\| < L/2}$ 
for some finite $L>0$ and for all $i\in\cbra{1,\ldots,N}$,
with $\|\cdot \|$ representing the standard Euclidean norm in 
$\mathbb{R}^d$.

The set $D$ is time-dependent and 
represents a subset of $\mathbb{R}^d$ centered at the center of mass of the swarm
$x_c(t)$, $t\geq 0$, outside of which, the density of the swarm is considered negligible.
We note that time-dependent transformation (\ref{eq:fluctuations})
only requires the knowledge of the initial conditions 
$x_i(0)$ and $v_i(0)$, $i = 1,\ldots,N$.

\subsection{The Mean-Field Limit}

When the number of agents $N$ becomes large, the use of continuum models for the evolution of a density of individuals becomes essential. 
In the following, we introduce a continuum model based on the 
hydrodynamic description derived by studying the mean-field particle limit
following the Cucker-Smale model (\ref{eq:cs}).

Consider the joint probability triple of the entire particle system $\{\Omega:=\mathbb R^{2Nd}, \mathcal B(\Omega), P_{xv} \}$, the state space for each particle $\{\mathbb R^{2d}, \mathcal B(\mathbb R^{2d}) \}$ and define the empirical (random) probability measure $F_{xv}^N:\Omega \times [t_0,t_f] \times \mathcal B(\mathbb R^{2d}) \rightarrow [0,1]$ such that
\begin{equation}
    F_{xv}^N(t,A):=\frac{1}{N}\sum_{i=1}^N \mathbb I_{A}((x_i(t),v_i(t)))
\end{equation}
where $\mathbb I_{A}(\cdot)$ is the indicator function, $A \in \mathcal B(\mathbb R^{2d})$. Some authors use Dirac measures (not the Dirac delta function) in this definition. $F_{xv}^N$ is a random measure which is purely atomic.
%
Using arguments originally due to McKean and Vlasov \cite{lancellotti2005vlasov, golse2003mean}, it can be shown that there exists a deterministic and continuous $F^*_{xv}$ such that $F_{xv}^N \overset{a.e.}{\to} F^*_{xv}$ in the weak sense, and, using Ito's lemma, that the joint probability density 
 $f^*_{xv}:[t_0,t_f] \times \mathbb R^{2d} \rightarrow \mathbb R_0^+$ associated with this measure, evolves according to the forward Kolmogorov equation on $[t_0, t_f] \times \mathbb R^{2d}$: 
\begin{equation}
\begin{cases}
\partial_t f^*_{xv} + \nabla_x\cdot (vf^*_{xv})+\nabla_v\cdot (\mathcal F f^*_{xv}) 
    = 0\\
\mathcal F(t,x,v) := \int_{\mathbb{R}^{2d}}\psi(x,s)(w-v)f^*_{xv}(t,s,w)dsdw.
\end{cases}
\label{fpke}
\end{equation}

We define the marginal probability density $\rho:[t_0,t_f]\times D \rightarrow \mathbb R^+_0$ (henceforth referred to only as density) as
\begin{equation}
    \rho(t,x):= \int_{\mathbb R^d} f^*_{xv}(t,x,v)dv
\end{equation}
and the momentum density $m:[t_0,t_f]\times D \rightarrow \mathbb R^d$ and bulk velocity
$u:[t_0,t_f]\times D \rightarrow \mathbb R^d$ as
\begin{equation}
    m(t,x):= \int_{\mathbb R^d} vf^*_{xv}(t,x,v)dv:=\rho(t,x)u(t,x)
\end{equation}
where $D\subseteq \mathbb R^d$. It is additionally assumed that $\rho, m, u$ are compactly supported.
Substituting in (\ref{fpke}), we obtain the $(d+1)$ compressible Euler equations on $[t_0, t_f] \times D$ (see also \cite{carrillo2010particle}):
\begin{equation}
\begin{cases}
\partial_t{\rho} + \nabla_x \cdot m = 0 \\
\partial_t{m} + \nabla_x \cdot (\rho^{-1}mm^T) = 
\rho \mathcal L_\psi m - m \mathcal L_\psi \rho
\end{cases}
\label{eq:euler}
\end{equation}
where 
\begin{equation}
\mathcal L_\psi \phi  (t,x)= \int_{D} \psi(x,s)\phi(t,s) ds.
\label{eq:convolution}
\end{equation}
is an integral transform with kernel $\psi:D \times D \rightarrow \mathbb R$.

\section{Screened Poisson Mediated Flocking}
\label{Sec:SemiLinearPoisson}

The integral transforms in the right hand side of (\ref{eq:euler}), 
which originate from 
the non-local interaction terms in the Cucker-Smale model,
make the compressible Euler equations (\ref{eq:euler}) a system of  
partial integro-differential equations, which is extremely challenging to solve.
We approach the solution of system (\ref{eq:euler}) 
by transforming it into an augmented system of PDEs, 
in order to use 
existing numerical methods to solve it.

Suppose that, by construction, 
the interaction function $\psi$ is a Green's function associated with some linear partial differential operator 
$\mathcal L_x:L^2(D;\mathbb R) \rightarrow L^2(D;\mathbb R)$,
such that
\begin{equation}
    \mathcal{L}_x y(t,x) = \phi(t,x)
\end{equation}
implies
\begin{equation}
    y(t,x) = \int_{D} \psi(x,s) \phi(t,s) ds.
    \label{eq:greens}
\end{equation}
Then, system (\ref{eq:euler}) 
is equivalent with the augmented system 
of ($2d+2$) partial differential equations:
\begin{equation}
\begin{cases}
\partial_t{\rho} + \nabla_x \cdot m = 0 \\
\mathcal{L}_x z = \rho \\
\mathcal{L}_x y = m \\
\partial_t{m} + \nabla_x \cdot (\rho^{-1}mm^T ) = 
\rho y - z m. 
\end{cases}
\label{eq:pdes}
\end{equation}
For the global regularity of 
system (\ref{eq:pdes}) 
one can refer to \cite{shvydkoy2017eulerian} and the references therein.
A classical example for $\mathcal{L}_x$ is the operator 
associated with the Poisson equation that arises in self-gravitational hydrodynamics \cite{Truelove_1998}. 
However, in order to alleviate the computational bottleneck introduced by the non-local 
integral terms in (\ref{eq:euler}), 
the operator $\mathcal{L}_x$ needs to be defined in a way such that:
\begin{itemize}
\item[$(a)$]  the newly introduced subsystem 
\begin{equation*}
\begin{aligned}
\mathcal{L}_x z = \rho \\
\mathcal{L}_x y = m 
\end{aligned}
\end{equation*}
can be efficiently solved with numerical methods, which is the case, for example, 
if $\mathcal{L}_x$ is an elliptic operator, 
\item[$(b)$]  the Green's function $\psi$ defined in (\ref{eq:greens}) 
retain the necessary properties of an interaction function that can drive 
the Cucker-Smale model (\ref{eq:cs}) to asymptotic flocking behavior, and
\item[$(c)$]  $\mathcal{L}_x$, and consequently $\psi$, depend on a set of parameters 
that make $\psi$ appropriate to model different interaction function profiles, 
depending on the behavior of the swarm.
\end{itemize}
With this in mind, we propose 
$\mathcal{L}_x$ to be the parametrized screened Poisson 
partial differential operator 

\begin{equation}
\mathcal L_x := -\frac{1}{2k}(\partial_x^2 - \lambda^2)
\label{eq:Lx}
\end{equation}
defined in the 
domain $D:= \cbra{x\in\mathbb{R}^d:\|x\| < L/2}$
with homogeneous Dirichlet boundary conditions.
\begin{remark}
We note that the choice of the proposed operator $\mathcal{L}_x$ in (\ref{eq:Lx}) 
is not necessarily unique.
However, to our knowledge, there is no formal method to 
construct an operator $\mathcal{L}_x$, and its associated Green's function $\psi$, 
that satisfy the conditions (a), (b), and (c) as described above.
\end{remark}

\noindent
To highlight the importance of 
conditions (a), (b), and (c), 
we stress that they allow for 
the system of partial integro-differential equations (\ref{eq:euler}) to be solved 
faster, as an augmented system of PDEs.
This is in contrast to the use of a standard kernel, e.g., the fractional Laplacian used in [11], 
that results in solving a system of 
fractional partial integro-differential equations.
In the rest of this section, we will present an analysis
of the proposed family of Green's functions as 
interaction functions of a Cucker-Smale model (\ref{eq:cs}), 
in the one-dimensional case ($d=1$), which, in 
Section \ref{Sec:HigherD} will be generalized 
to higher dimensions.
When $d=1$, system (\ref{eq:pdes}) can be compactly written as
\begin{equation}
\begin{cases}
\partial_t U  + \partial_x F(U) = S(Y, U) \\
\mathcal L_x Y = U
\end{cases}
\label{eq:euler1d}
\end{equation}
where $U:=[\rho, m]^T$, $F:=[m,m^2\rho^{-1}]^T$, $S:=[0,\rho y - zm]^T$, and $Y:=[z,y]^T$.
The Green's function $\psi$ associated with the BVP introduced in (\ref{eq:Lx}) 
can be analytically computed as 
(see Appendix \ref{App:psi}):
\begin{equation}
\psi(x,s) = 
\begin{cases}
K \sigma_p(s) \sigma_m(x) & s\leq x \\
K \sigma_m(s) \sigma_p(x) & s>x 
\end{cases}
\label{eq:psi}
\end{equation}
where 
%

%
%
\begin{equation}
\begin{aligned}
K &=  -\frac k \lambda \frac{1}{e^{\lambda L} - e^{-\lambda L}} \\
\sigma_p(z) &=  2\sinh \pbra{\lambda(z+L/2)} \\
\sigma_m(z) &=  2\sinh \pbra{\lambda(z-L/2)} \\
\end{aligned}
\label{eq:psi_coef}
\end{equation}

One of the parameters of the interaction function $\psi$ in (\ref{eq:psi}),
which affects 
the flocking behavior of the system $\mathcal{G}$, 
is the size $L$ of the bounded domain $D$ in which it is defined.
The effect of the boundedness of the domain is
illustrated in Fig. \ref{fig:psi}, where, 
for different fixed values of $x$,
$\psi(x,s)$ is compared to the function 
\begin{equation}
  \hat \psi(x,s) = \frac{k}{\lambda}e^{-\lambda\|x-s\|}  
\end{equation}
which is the Green's function corresponding to $\mathcal{L}_x$ defined 
in an infinite domain.
We can interpret this effect as a tendency to avoid
the spread of the swarm in large distances with respect to the swarm's
center of mass at each time step. 

\begin{figure}[h]
        \centering
        \includegraphics[trim=33 0 55 35, clip,width=.24\textwidth]{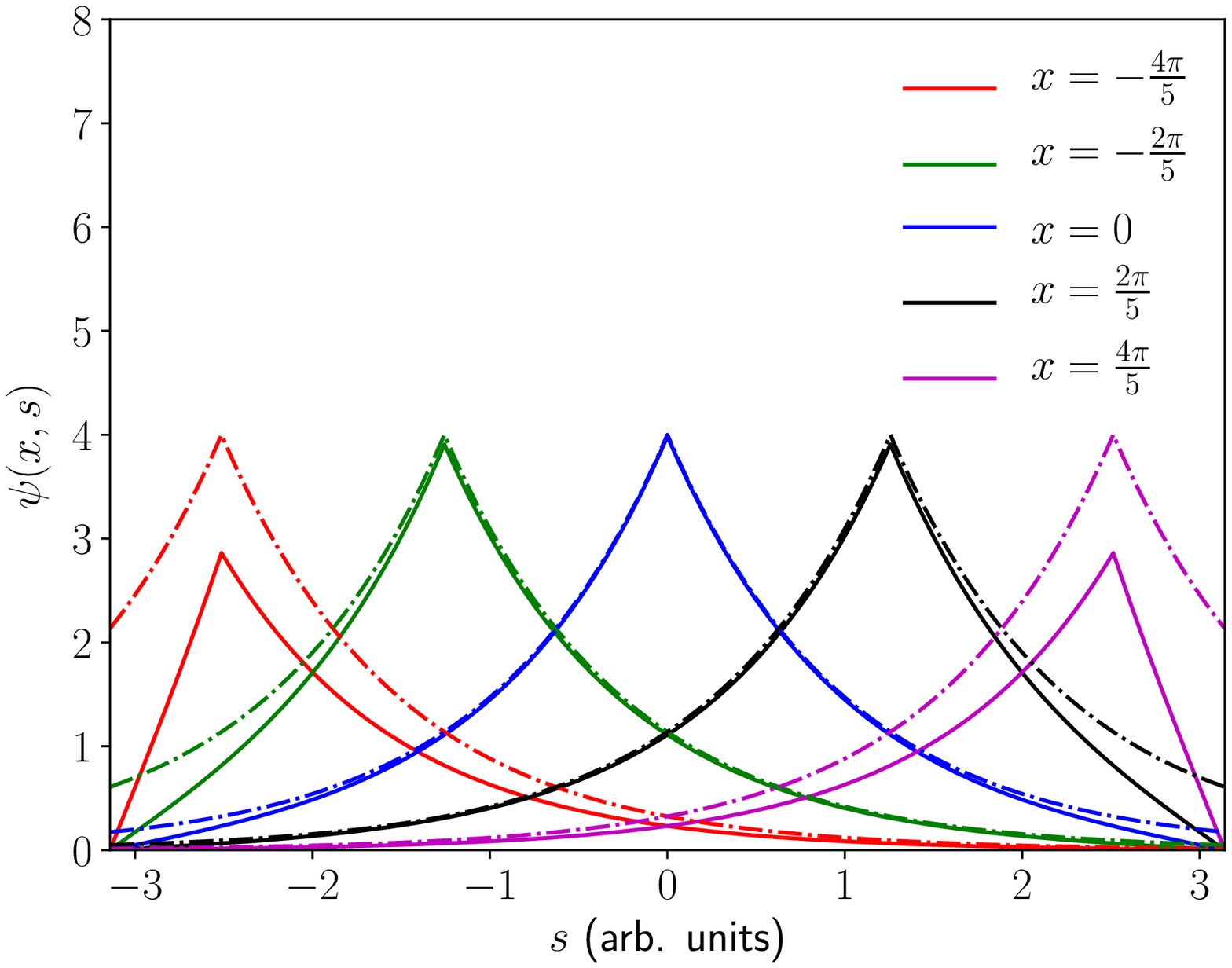}
         \includegraphics[trim=33 0 55 35, clip,width=.24\textwidth]{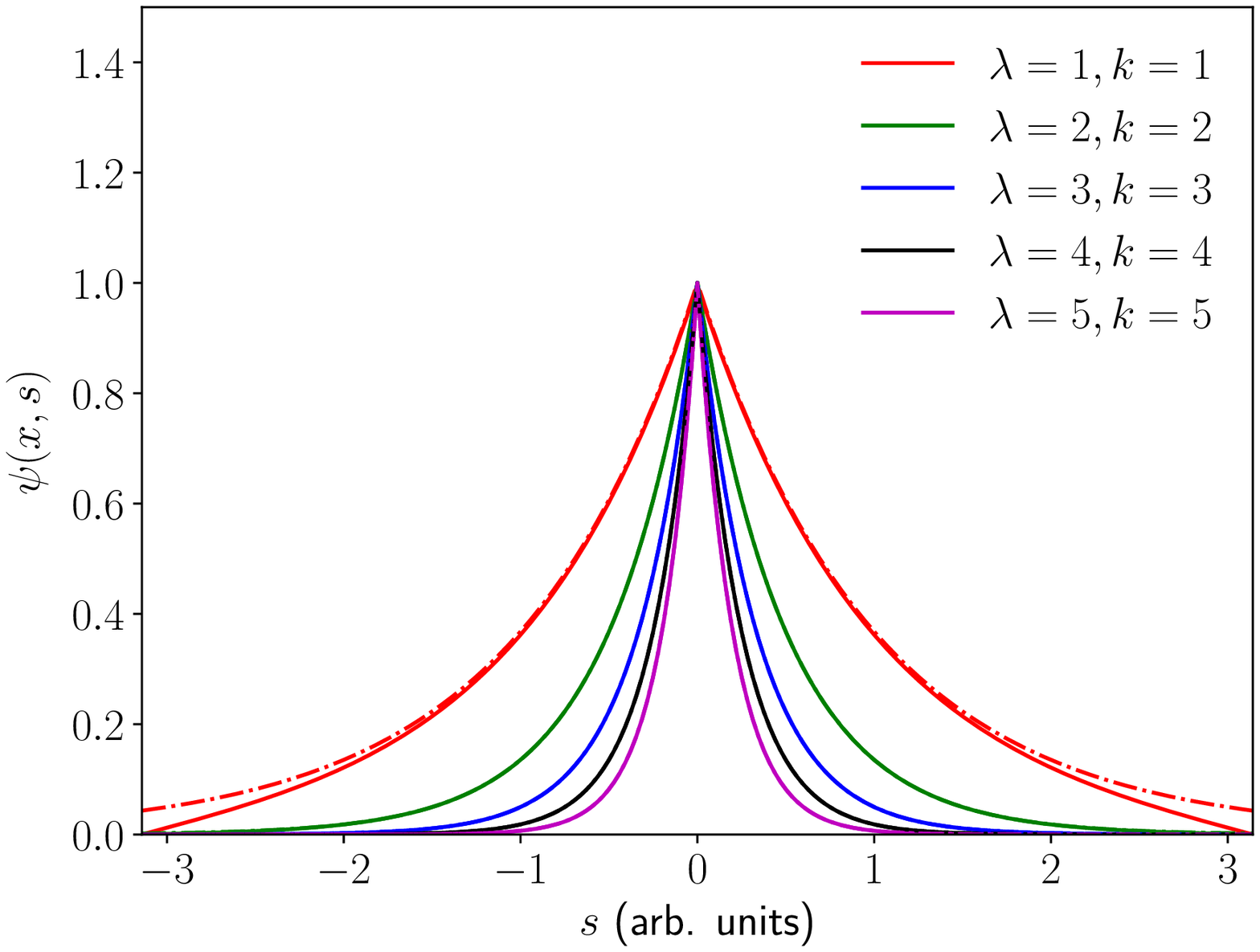}
        \caption{(left) Illustration of $\psi(x,\cdot)$ (\ref{eq:psi}) 
        for different values of $x$, and for 
        $\lambda=1$, $k=4$ on $[-\pi, \pi]$. 
        The function $\hat \psi(x,s)= \frac{k}{\lambda}e^{-\lambda\|x-s\|}$, 
        is depicted in dashed lines.
        (right) The effect of the parameters $k$, and $\lambda$ on the 
        profile of the interaction function $\psi$. The function 
        $\hat \psi$
        is depicted in dashed lines. Here, $x=0$.
        }
        
        \label{fig:psi}
\end{figure}

The parameters $k$ and $\lambda$ control the 
profile of the interaction function $\psi$ 
by affecting the influence factor of each agent to its neighborhood,
essentially changing the communication radius of each agent.
This effect is similar to the parameters $K, \gamma$ in the original 
proposed interaction function  
${\psi_{CS}(x,s) = \frac{K}{(1+\|x-s\|^2)^\gamma}}$
from Cucker and Smale
\cite{cucker2007emergent}.
As a result, a wide range of flocking behaviors can be modeled 
using the Cucker-Smale model and the 
proposed parametric interaction function $\psi$.

\subsection{Asymptotic Flocking}
\label{sSec:Flocking}

In this section we investigate the sufficient conditions 
on the initial conditions $x_i(0)$,
$v_i(0)$, $i=1,\ldots,N$, and 
the size $L$ of the domain,
such that, under the new interaction function (\ref{eq:psi}), (\ref{eq:psi_coef}), 
the solution $\cbra{(x_i(t), v_i(t))}_{i=1}^N$, $t\geq 0$, 
of system (\ref{eq:cs}) satisfies the flocking conditions of 
Definition \ref{def:flocking}.

We define
$\hat x:=(\hat x_1,\ldots,\hat x_N)$, $\hat v:=(\hat v_1,\ldots,\hat v_N)$ $\in\mathbb{R}^{Nd}$,
$|\hat x| = \pbra{\frac 1 N \sum_{i=1}^N \|\hat x_i\|^2}^{1/2}$, and 
$|\hat v| = \pbra{\frac 1 N \sum_{i=1}^N \|\hat v_i\|^2}^{1/2}$,
where $\|\cdot\|$ represents the standard Euclidean norm 
in $\mathbb{R}^d$. 
Notice that $|\cdot|$ is equivalent to the
Euclidean norm in $\mathbb{R}^{Nd}$, which we associate with the inner product
$\abra{\cdot,\cdot}$ such that $\abra{x,x} = |x|^2$.
Then the following theorem holds:

\begin{theorem}

If 
$\frac 1 2 \max_{1\leq i,j\leq N}\|\hat x_i(0)-\hat x_j(0)\|< \hat x_M $
for some $\hat x_M < \frac L 4 $, 
where $L$ defines the domain $D$ in (\ref{eq:Lx}), and
$$ |\hat v_0| < \int_{|\hat x_0|}^{\hat x_M} \psi(-2\hat x_M,\lambda s) ds, $$
for some $\lambda>0$,
then the solution $\cbra{(x_i(t), v_i(t))}_{i=1}^N$, $t\geq 0$, 
of system (\ref{eq:cs}) satisfies the flocking conditions of 
Definition \ref{def:flocking}.
\label{thm:flocking1D}
\end{theorem}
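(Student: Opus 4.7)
The plan is to adapt the classical Ha--Liu bootstrap argument for Cucker--Smale flocking to the screened-Poisson kernel (\ref{eq:psi}). Working with the fluctuation variables $(\hat x_i,\hat v_i)$ of (\ref{eq:fluctuations}), which satisfy (\ref{eq:cs}) by (\ref{eq:center_of_mass}), I introduce the scalar spreads $X(t)=|\hat x(t)|$ and $V(t)=|\hat v(t)|$ and derive a coupled pair of differential inequalities. Cauchy--Schwarz applied to $\oder{}{t}|\hat x|^2=2\abra{\hat x,\hat v}$ gives $|X'(t)|\le V(t)$. Exploiting the symmetry $\psi(x,s)=\psi(s,x)$, one gets the dissipation identity
\begin{equation*}
\oder{}{t}|\hat v|^2 \;=\; -\frac{1}{N^2}\sum_{i,j}\psi(\hat x_i,\hat x_j)\,\norm{\hat v_i-\hat v_j}^2,
\end{equation*}
so $V'(t) \le -\psi_m(t)\,V(t)$ with $\psi_m(t):=\min_{i,j}\psi(\hat x_i,\hat x_j)$. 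From the closed form (\ref{eq:psi})--(\ref{eq:psi_coef}) one checks that $\psi$ is strictly positive and monotone decreasing in the pairwise distance on $D$, which lets us bound $\psi_m(t)$ from below by the integrand $\psi(-2\hat x_M,\lambda\,X(t))$ featured in the theorem as long as $X(t)\le\hat x_M$.

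Next, I form the Lyapunov-type functional
\begin{equation*}
\mathcal{E}(t) \;=\; V(t) \;+\; \int_{|\hat x_0|}^{X(t)} \psi(-2\hat x_M,\lambda s)\,ds,
\end{equation*}
and combine the two inequalities to conclude that $\oder{}{t}\mathcal{E}(t)\le 0$ whenever $X(t)\le \hat x_M$. The bootstrap step is then standard: define $T^{*}=\sup\cbra{t\ge 0 : \tfrac{1}{2}\max_{1\le i,j\le N}\norm{\hat x_i(s)-\hat x_j(s)}<\hat x_M\ \forall s\le t}$. If $T^{*}<\infty$, continuity forces the spread to equal $\hat x_M$ at $t=T^{*}$, whence monotonicity of $\mathcal{E}$ together with the hypothesis $|\hat v_0|<\int_{|\hat x_0|}^{\hat x_M}\psi(-2\hat x_M,\lambda s)\,ds$ yields $V(T^{*})<0$, a contradiction. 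Hence $T^{*}=\infty$ and the position fluctuations stay uniformly bounded by $2\hat x_M$, giving spatial coherence with $\Lambda=2\hat x_M$.

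With $X(t)<\hat x_M$ for all $t\ge 0$, $\psi$ is bounded below on the agents' support by a strictly positive constant $\psi_\star$, so Gronwall applied to $V'\le -\psi_\star V$ delivers $V(t)\le V(0)e^{-\psi_\star t}\to 0$, which is velocity alignment in the sense of Definition \ref{def:flocking}. The main obstacle I expect is the first step: the quantity $|\hat v|^2$ is an $\ell^{2}$-average over agents while $\psi_m$ depends on the maximum pairwise position distance, so extracting a lower bound of the precise form $\psi(-2\hat x_M,\lambda X)$ appearing in the theorem requires either reworking the inequalities in terms of the $\ell^{\infty}$ spread $\max_{i,j}\norm{\hat x_i-\hat x_j}$ throughout, or invoking a tailored monotonicity lemma for (\ref{eq:psi}) that converts agentwise bounds into bounds in the scalar $X=|\hat x|$. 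Matching this integrand exactly is what will drive the technical details of the proof.
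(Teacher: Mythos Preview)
Your proposal is correct and follows essentially the same Ha--Liu dissipative-inequality/Lyapunov/Gr{\"o}nwall route as the paper, including the same functional $\mathcal{E}$ and the same contradiction argument. The obstacle you flag is resolved in the paper exactly as you anticipate: it uses the elementary bound $|\hat x|\le \tfrac{1}{2}\max_{i,j}\|\hat x_i-\hat x_j\|$ together with the identity $\sum_{i,j}\|\hat v_i-\hat v_j\|^2=2N^2|\hat v|^2$ (valid because $\sum_i\hat v_i=0$) and the existence of some $\lambda>0$ with $\lambda|\hat x|\ge \max_{i,j}\|\hat x_i-\hat x_j\|$, so that the agentwise minimum $\psi_m(t)$ is bounded below by $\psi(-2\hat x_M,\lambda\,|\hat x|)$ precisely as in the theorem's integrand.
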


\begin{proof}
The result follows by showing that the solution $(|\hat x(t)|,|\hat v(t)|)$ 
satisfies the system of dissipative differential inequalities: 
\begin{equation*}
\oder{|\hat x|}{t} \leq |\hat v|,\quad 
\oder{|\hat v|}{t} \leq -\phi(|\hat x|)|\hat v|
\end{equation*}
We first notice that, from the Cauchy-Schwartz inequality: 
\begin{equation}
\oder{|\hat x|^2}{t}=\frac 2 N \abra{\oder{\hat x}{t},\hat x}
=\frac 2 N \abra{\hat v,\hat x}\leq 2|\hat x||\hat v|
\end{equation}
Because $\oder{|\hat x|^2}{t} = 2 |\hat x| \oder{|\hat x|}{t}$, this implies that
\begin{equation}
\oder{|\hat x|}{t} \leq |\hat v|.
\label{eq:bound_x}
\end{equation}
%
%
Now we have assumed that for the initial conditions $\cbra{\hat x_i(0)}$, 
$L$, which is a design parameter, is large enough such that there exist an 
$\hat x_M\in[0,\frac L 4)$
for which 

\begin{equation}
|\hat x(0)| 
< \hat x_M
\end{equation} 
since $|\hat x| \leq \nicefrac 1 2 \max_{1\leq i,j\leq N}\|\hat x_i-\hat x_j\|$ \cite{shiffler1980upper}. 
From the definition of the $\psi$ function in 
(\ref{eq:psi}), (\ref{eq:psi_coef}), 
it follows that for $|\hat x| \leq \hat x_M$, 
\begin{equation}
\begin{aligned}
\psi(x_j,x_i) &\geq \psi(-2\hat x_M, \|\hat x_j-\hat x_i\|) \\
&\geq \psi(-2\hat x_M, \lambda|\hat x|)
\end{aligned}
\end{equation}
for some $\lambda>0$ 
such that 
$\lambda | \hat x | \geq \max_{1\leq i,j\leq N}\|\hat x_i-\hat x_j\|$.
This implies that
\begin{equation}
\begin{aligned}
\oder{|\hat v|^2}{t} &=-\frac 1 {N^2} \sum_{1\leq i,j\leq N} \psi(\hat x_j, \hat x_i)\|\hat v_j-\hat v_i\|^2 \\
&\leq -\frac 1 {N^2} \psi(-2\hat x_M, \lambda|\hat x|) \sum_{1\leq i,j\leq N} \|\hat v_j-\hat v_i\|^2 \\
&\overset{(*)}{=} - 2 \psi(-2\hat x_M, \lambda|\hat x|) |\hat v|^2
\end{aligned}
\end{equation}
and, consequently,

\begin{equation}
\oder{|\hat v|}{t} \leq - \psi(-2\hat x_M, \lambda|\hat x|) |\hat v| :=-\phi(|\hat x|)|\hat v|
\label{eq:bound_v}
\end{equation}
In step $(*)$, we have used the fact that 
\begin{equation*}
\begin{aligned}
\sum_{1\leq i,j\leq N} \|\hat v_j-\hat v_i\|^2 &= 
2N \sum_{i = 1}^{N} \|\hat v_i\|^2 
- 2 \abra{\sum_{i = 1}^{N} \hat v_i, \sum_{j = 1}^{N} \hat v_j} \\
&= 2{N^2} |\hat v|^2
\end{aligned}
\end{equation*}
since $ \sum_{i=1}^N \hat v_i(t)=0 $, $t\geq 0$. 
Next we notice that the Lyapunov function
\begin{equation}
V(|x|,|v|):=|\hat v|+ \int_{\alpha}^{|\hat x|} \phi(s) ds,\ \alpha\geq 0 
\end{equation}
is non-increasing along the solutions of $(|\hat x(t)|,|\hat v(t)|)$ 
of the system of dissipative differential inequalities 
(\ref{eq:bound_x}) and (\ref{eq:bound_v}), 
since
\begin{equation}
\begin{aligned}
\oder{}{t}V(|\hat x|,|\hat v|) & = \oder{|\hat v|}{t} + \phi(|\hat x|)\oder{|\hat x|}{t} \\
&\leq \phi(|\hat x|) 
\pbra{ -|v| + \oder{|\hat x|}{t}} \\
&\leq 0
\end{aligned}
\end{equation}
which implies that 
\begin{equation}
|\hat v(t)|+ \int_{|\hat x_0|}^{|\hat x|} \phi(s) ds 
\leq |\hat v(0)|
\label{eq:contradiction1}
\end{equation}
and 
\begin{equation}
|\hat x| \leq \hat x_M
\end{equation}
as long as $|\hat x_0| \leq \hat x_M$. 
This means that $\max_{1\leq i,j\leq N}\|\hat x_i-\hat x_j\| \leq \lambda \hat x_M$ 
and the spatial coherence requirement
of Definition \ref{def:flocking} is satisfied for some 
$\Lambda >0$.

Regarding the velocity consensus, 
we have assumed that the initial velocity $|\hat v(0)|$ 
satisfies 
\begin{equation}
|\hat v(0)|<\int_{|\hat x(0)|}^{\hat x_M} \phi(s)ds
\end{equation}
and, since $\phi$ is non-negative for $|\hat x(t)| \leq \hat x_M$, 
there exists a $\bar x \in [|\hat x(0)|,\hat x_M]$ for which 
\begin{equation}
|\hat v(0)|=\int_{|\hat x(0)|}^{\bar x} \phi(s)ds 
\end{equation} 
Suppose there exists a $t^*\geq 0$, such that $\hat x^*:=|\hat x(t^*)|\in (\bar x,\hat x_M]$.
Then 
\begin{equation}
\int_{|\hat x(0)|}^{\hat x^*} \phi(s)ds > |v(0)| 
\end{equation}
which contradicts (\ref{eq:contradiction1}).
Therefore, 
%
from (\ref{eq:bound_v}) and the Gr{\"o}nwall-Bellman inequality 
\begin{equation}
|\hat v(t)| \leq |\hat v(0)|e^{-\phi(\bar x)t},\ t\geq 0
\end{equation}
i.e., the flocking conditions of Definition 
\ref{def:flocking} are satisfied.
\end{proof}

We note that if the conditions of Theorem \ref{thm:flocking1D} do not hold, then 
flocking is possible but not guaranteed. In \cite{ha2009simple}, similar 
conditions and their effect on the flocking behavior of the swarm are investigated.

\subsection{Conservation of Mass and Momentum}
\label{sSec:Conservation}

Next, we show that, in 
system (\ref{eq:euler1d}) with the operator $\mathcal L_x$ 
as defined in (\ref{eq:Lx}), 
mass and momentum are conserved.

\begin{lemma} The operator $\mathcal L_x$ (\ref{eq:Lx}) 
is self-adjoint
and invertible, and therefore has a self-adjoint inverse 
$\mathcal L_x^{-1}$.
\end{lemma}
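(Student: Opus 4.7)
The plan is to establish self-adjointness, invertibility, and self-adjointness of the inverse as three separate but closely linked claims, working on the domain $\mathcal{D}(\mathcal{L}_x) = H^2(D) \cap H^1_0(D)$ which encodes the homogeneous Dirichlet boundary conditions.

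First, for self-adjointness, I would compute $\langle \mathcal{L}_x f, g\rangle_{L^2(D)}$ for $f,g \in \mathcal{D}(\mathcal{L}_x)$ and integrate by parts twice on the $-\partial_x^2$ term. The boundary contributions $[f' g]_{\partial D}$ and $[f g']_{\partial D}$ both vanish because $f$ and $g$ vanish on $\partial D$, leaving $\langle -\partial_x^2 f, g\rangle = \langle f, -\partial_x^2 g\rangle$. The $\lambda^2$ multiplication term is trivially symmetric, and the scalar prefactor $-\tfrac{1}{2k}$ is real, so $\langle \mathcal{L}_x f, g\rangle = \langle f, \mathcal{L}_x g\rangle$. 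Identifying the adjoint domain and showing it coincides with $\mathcal{D}(\mathcal{L}_x)$ is the technical piece here; the standard argument is that $-\partial_x^2 + \lambda^2$ on $H^2 \cap H^1_0$ is well-known to be essentially self-adjoint on a bounded interval, which I would simply cite.

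For invertibility, I would appeal to the already-constructed Green's function $\psi$ in (\ref{eq:psi})–(\ref{eq:psi_coef}): the explicit computation in Appendix \ref{App:psi} provides a bounded integral operator $\mathcal{K}\phi(x) := \int_D \psi(x,s) \phi(s)\, ds$ that satisfies $\mathcal{L}_x \mathcal{K} \phi = \phi$ and $\mathcal{K} \mathcal{L}_x f = f$ on the respective domains, so $\mathcal{K} = \mathcal{L}_x^{-1}$. Alternatively, one can observe that the associated bilinear form $a(f,g) = \tfrac{1}{2k}\int_D (\partial_x f \partial_x g + \lambda^2 f g)\, dx$ is continuous and coercive on $H^1_0(D)$ (with coercivity constant at least $\lambda^2/(2k) > 0$), so Lax--Milgram delivers a bounded inverse; I would mention this as a self-contained alternative to the explicit Green's-function route.

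Finally, self-adjointness of $\mathcal{L}_x^{-1}$ is automatic: for any bijective self-adjoint operator between Hilbert spaces, the inverse is self-adjoint, since for $u,v \in L^2(D)$ with $u = \mathcal{L}_x f$ and $v = \mathcal{L}_x g$,
\begin{equation*}
\langle \mathcal{L}_x^{-1} u, v\rangle = \langle f, \mathcal{L}_x g\rangle = \langle \mathcal{L}_x f, g\rangle = \langle u, \mathcal{L}_x^{-1} v\rangle.
\end{equation*}
One can also see this directly from the explicit kernel: the piecewise formula (\ref{eq:psi}) is manifestly symmetric in $x$ and $s$, so the integral operator $\mathcal{L}_x^{-1}$ has a symmetric kernel and is therefore self-adjoint. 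The main obstacle I anticipate is not any single calculation but being careful about the operator domain when asserting self-adjointness (as opposed to mere symmetry); I would handle this by explicitly identifying $\mathcal{D}(\mathcal{L}_x)$ and using the Green's-function representation to verify that the range of $\mathcal{L}_x$ is all of $L^2(D)$.
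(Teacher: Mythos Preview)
Your proposal is correct and follows essentially the same approach as the paper: integration by parts (Green's second identity) for self-adjointness, the explicit Green's function for invertibility, and the general fact that a bijective self-adjoint operator has a self-adjoint inverse. You are more careful than the paper about the operator domain and offer Lax--Milgram as an alternative invertibility argument, but the core reasoning is identical.
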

\begin{proof} 
Self-adjointness of the inverse follows immediately from self-adjointness of $\mathcal L_x$ and the existence of the inverse \cite{taylor2010partial}. It is clear that $\mathcal L_x$ has an inverse since the Green's function is nontrivial as given by its sine series. Self-adjointness of $\mathcal L_x$ follows as a direct application of integration by parts and Green's second identity 
\cite{evans1998partial}.
%
\end{proof}

\begin{proposition}
If $Y\in C^{\infty}_{\mathbb R, \mathcal C}(D)$, then mass
and momentum are conserved, i.e.
\begin{equation}\label{cons2}
\frac{d}{dt}\int_D U d x = 
    \int_D S d  x = 0.
\end{equation}
\end{proposition}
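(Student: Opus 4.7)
The plan is to prove both equalities in (\ref{cons2}) by integrating the PDE system (\ref{eq:euler1d}) componentwise over the domain $D=(-L/2,L/2)$ and exploiting (i) the boundary/support conditions encoded in the space $C^{\infty}_{\mathbb{R},\mathcal{C}}(D)$, and (ii) the self-adjointness of $\mathcal{L}_x$ just established in the preceding lemma. Concretely, I would split into a mass balance (the first component of $U$) and a momentum balance (the second component), and show that each integrated flux/source contribution vanishes.

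For the mass equation, I would integrate $\partial_t \rho + \partial_x m = 0$ over $D$. Interchanging the time derivative and the spatial integral, and applying the fundamental theorem of calculus, gives $\frac{d}{dt}\int_D \rho\,dx = -[m(t,\cdot)]_{-L/2}^{L/2}$, and this boundary term vanishes because $\rho, m, u$ are assumed compactly supported inside $D$ (as stated in Section~II.B). The corresponding component of $\int_D S\,dx$ is identically zero by the definition of $S$, so the mass identity is immediate.

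For momentum, integrating the second component yields
\begin{equation*}
\frac{d}{dt}\int_D m\,dx + \bigl[\rho^{-1}m^2\bigr]_{-L/2}^{L/2} = \int_D (\rho y - zm)\,dx,
\end{equation*}
where again the flux boundary term drops out by compact support. The heart of the proof is then to show that
\begin{equation*}
\int_D \rho y\,dx = \int_D z m\,dx,
\end{equation*}
i.e.\ that the right-hand side vanishes. Here I would use the auxiliary equations $\mathcal{L}_x z = \rho$ and $\mathcal{L}_x y = m$ to rewrite both integrands and invoke the self-adjointness of $\mathcal{L}_x$ proved in the previous lemma:
\begin{equation*}
\int_D \rho y\,dx = \int_D (\mathcal{L}_x z)\,y\,dx = \int_D z\,(\mathcal{L}_x y)\,dx = \int_D z m\,dx.
\end{equation*}

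The main obstacle (and the reason for the hypothesis $Y\in C^{\infty}_{\mathbb{R},\mathcal{C}}(D)$) is making the self-adjoint pairing rigorous: the usual integration-by-parts step that underlies self-adjointness of the screened Poisson operator generates boundary terms of the form $[y\,\partial_x z - z\,\partial_x y]_{-L/2}^{L/2}$. These vanish precisely because $z,y$ satisfy the homogeneous Dirichlet boundary conditions that were imposed on $\mathcal{L}_x$ in (\ref{eq:Lx}); the smoothness and boundary class $\mathcal{C}$ stipulated in the hypothesis is exactly what is needed to justify the interchange of derivative and integral and the validity of Green's identity. Once this is in place, combining the two pieces gives $\frac{d}{dt}\int_D U\,dx = \int_D S\,dx = 0$, as claimed.
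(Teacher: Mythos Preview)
Your proposal is correct and follows essentially the same route as the paper: integrate the balance laws over $D$, drop the flux boundary terms by compact support, and kill the source term via self-adjointness from the preceding lemma. The only cosmetic difference is that the paper phrases the key cancellation through self-adjointness of the \emph{inverse} $\mathcal{L}_x^{-1}$ (writing $y=\mathcal{L}_x^{-1}m$, $z=\mathcal{L}_x^{-1}\rho$ and pairing $\langle \rho,\mathcal{L}_x^{-1}m\rangle=\langle \mathcal{L}_x^{-1}\rho,m\rangle$), whereas you use self-adjointness of $\mathcal{L}_x$ itself; both are established in Lemma~1 and yield the same identity.
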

\begin{proof}
We obtain (\ref{cons2}) by simply integrating the balance
laws in (\ref{eq:euler1d}) over $D$ and apply the Leibniz rule.
The conclusion follows directly from the self
adjointness of the inverse proved in Lemma 1. 
\end{proof}

\subsection{Computational Methods}
\label{sSec:ComputationalMethods}

Adopting the proposed interaction function form 
(\ref{eq:psi}), (\ref{eq:psi_coef}),  
results in the system of PDEs (\ref{eq:euler1d}). 
We describe here the computational methods used 
to efficiently solve (\ref{eq:euler1d}) and 
compute the macroscopic quantities,
i.e. the momentum and density. 

\subsubsection{Hyperbolic Solver}
\label{ssSec:HyperbolicSolver}

To solve the hyperbolic system of (\ref{eq:euler1d}), 
we apply the finite volume method \cite{leveque_2002}. 
We define the sequence of points 
$x_s = \{x_1, ..., x_i, ..., x_{N_s} \}$
which are the centers of the cells 
$I_i := [x_{i-\frac{1}{2}}, x_{i+\frac{1}{2}})$, 
and average the PDE
over these cells, which gives
\begin{equation}
    \frac{1}{\lambda(I_i)}\frac{d}{dt}\int_{I_i} U dx = 
    -\frac{1}{\lambda(I_i)}\int_{I_i} \partial_{x} F dx + 
    \frac{1}{\lambda(I_i)}\int_{I_i} S dx
\end{equation}
where $\lambda(\cdot)$ is the Lebesgue measure. 
Assuming these are
identical, such that $\Delta x :=\lambda(I_i) \forall i$, 
we can make use of the divergence
theorem, and replace the integrals of $U,F, S$ with their cell-averages,
i.e. their midpoint values $\bar U, \bar F, \bar S$, 
in order to obtain semi-discrete scheme:
\begin{equation}
    \frac{d}{dt} \bar U_i = 
    -\frac{1}{\Delta x}(\bar F_{i+\frac{1}{2}} - \bar F_{i-\frac{1}{2}}) + 
    \bar S_i
\end{equation}
where $\bar U_i := \bar U(x_i), \bar F_i := \bar F(x_i), \bar S := \bar S(x_i)$.
For the fluxes, we assume piecewise linearity and use the Kurganov-Tadmor flux \cite{KURGANOV2000241} given by
\begin{equation}
\begin{split}
    \bar F_{i+\frac{1}{2}}&:=\frac{1}{2}[F^+_{i+\frac{1}{2}} + F^-_{i+\frac{1}{2}} - \max_\pm \{|u^\pm_{i+\frac{1}{2}}|\}(U^+_{i+\frac{1}{2}}-U^-_{i+\frac{1}{2}})]\\
    U^+_{i + \frac{1}{2}} &:= U_{i+1} - \frac{\Delta x}{2}minmod(\frac{U_{i+2} - U_{i+1}}{\Delta x},\frac{U_{i+1} - U_{i}}{\Delta x})\\
     U^-_{i + \frac{1}{2}} &:= U_{i} + \frac{\Delta x}{2}minmod(\frac{U_{i+1} - U_{i}}{\Delta x},\frac{U_{i} - U_{i-1}}{\Delta x})\\
    \end{split}
\end{equation}
where $minmod(a,b) := \frac{1}{2}(sign(a)+sign(b))\min(|a|, |b|)$. 


\subsubsection{Elliptic Solver}

To solve the elliptic equations of (\ref{eq:euler1d}), we employ spectral methods. Noting that a basis for the space of $L^2((0,L);\mathbb R)$ functions with zero BCs is the sequence $\{b_n(x):=\sin \frac{n\pi x}{L} \}_{n \in \mathbb N}$, we propose candidate solutions to the elliptic BVP for fixed $t$ as Fourier sine series:
\begin{equation}
    \phi(x,t) = \sum_{n=1}^\infty \hat \phi_n(t)b_n(x'),
\end{equation}
where $x'(x) = x + \frac{L}{2}$
Now, we apply the operator $\mathcal L_x$ to $\phi$, which yields:
\begin{equation}
    \sum_{n=1}^\infty \frac{1}{2k}(\mu_n + \lambda^2)\phi_n(t) b_n(x') = q(x',t)
\end{equation}
where $\mu_n := (\frac{n\pi}{L})^2$ are the eigenvalues of $\partial_x^2(\cdot)$ with Dirichlet BCs.
Now, let $\hat q_n(t)$ denote the $n-$th Fourier sine coefficient for $q(x,t)$. Considering an approximation to $\phi$ with $N_s$ harmonics corresponding to the same $N_s$ as in the hyperbolic solver, we obtain the semi-discrete spectral method:
\begin{equation}
    \hat \phi_n(t) = \frac{2k \hat q_n(t)}{\mu_n + \lambda^2}, \text{ } 1 \leq n \leq N_s
\end{equation}
We implement this spectral method using discrete sine transform (DST) II in the forward direction and sine transform III in the backward direction to obtain the approximation of $\phi$ from its sine coefficients. The spectral method is converted into a fully discrete scheme according to the temporal discretization of the semi-discrete scheme of the hyperbolic solver. 
%

\begin{remark}
We note that, in 1D, the computation time of using a direct convolution sum (parallelized)
to compute the integral term of the original system (\ref{eq:euler})
has complexity $O(N_s^2)$ (where $N_s$ is the number of cells), since a sum is required for each point on the line where 
the convolution is to be approximated.
In contrast, the FFT-based elliptic solver has complexity $O(N_s \log(N_s) + N_s)$,
where the added $N_s$ corresponds to multiplication of coefficients.
The difference becomes even more significant in higher dimensions, as explained in 
Section \ref{Sec:HigherD}. 
Fig. 2 presents a quantitative comparison.
\label{rmk:1D}
\end{remark}

%
%
%
%


\begin{figure}[ht!]
        \centering
        \includegraphics[trim=0 0 40 40,clip,width=0.24\textwidth]{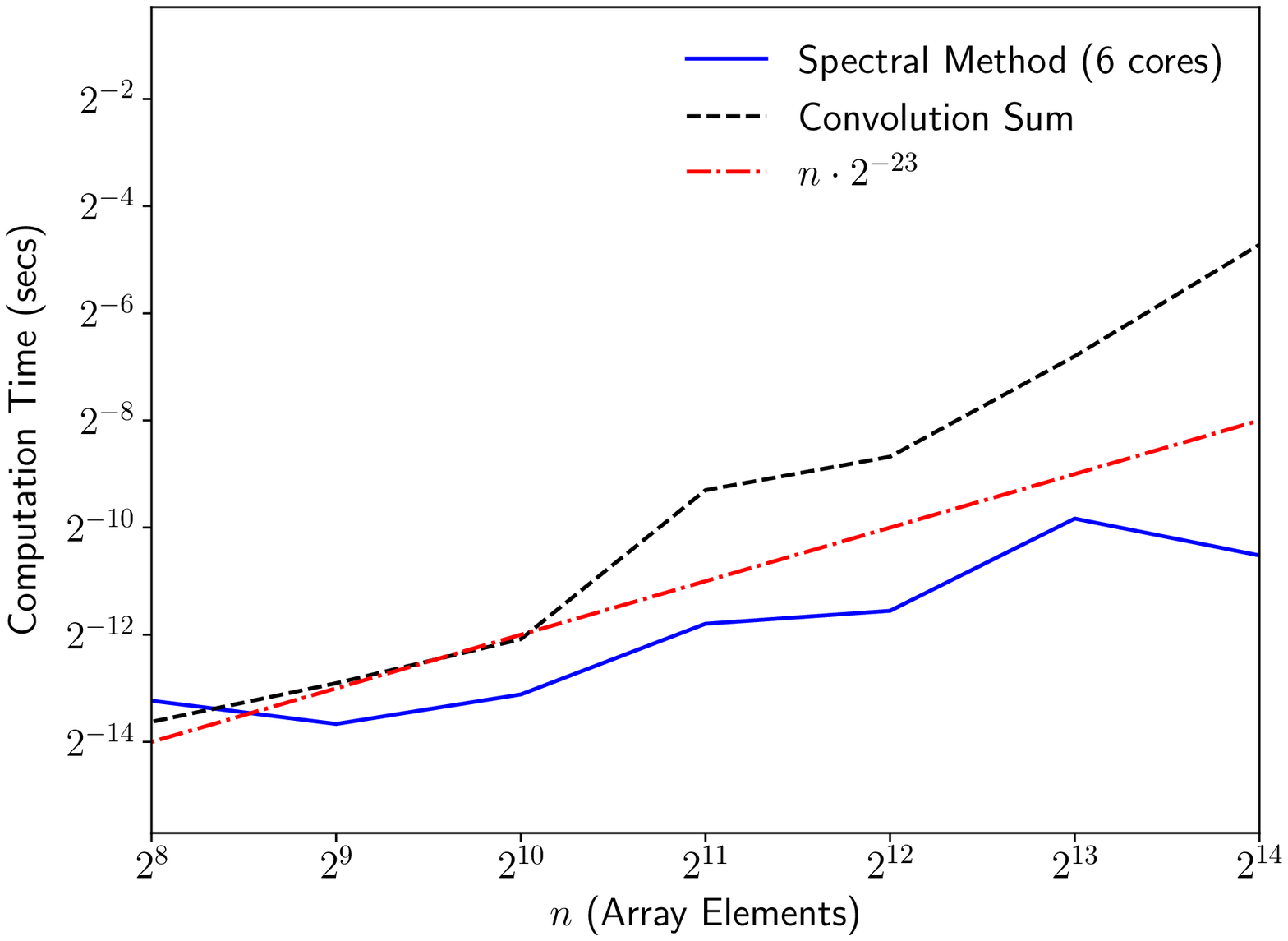}
        \includegraphics[trim=0 0 40 40,clip,width=0.24\textwidth]{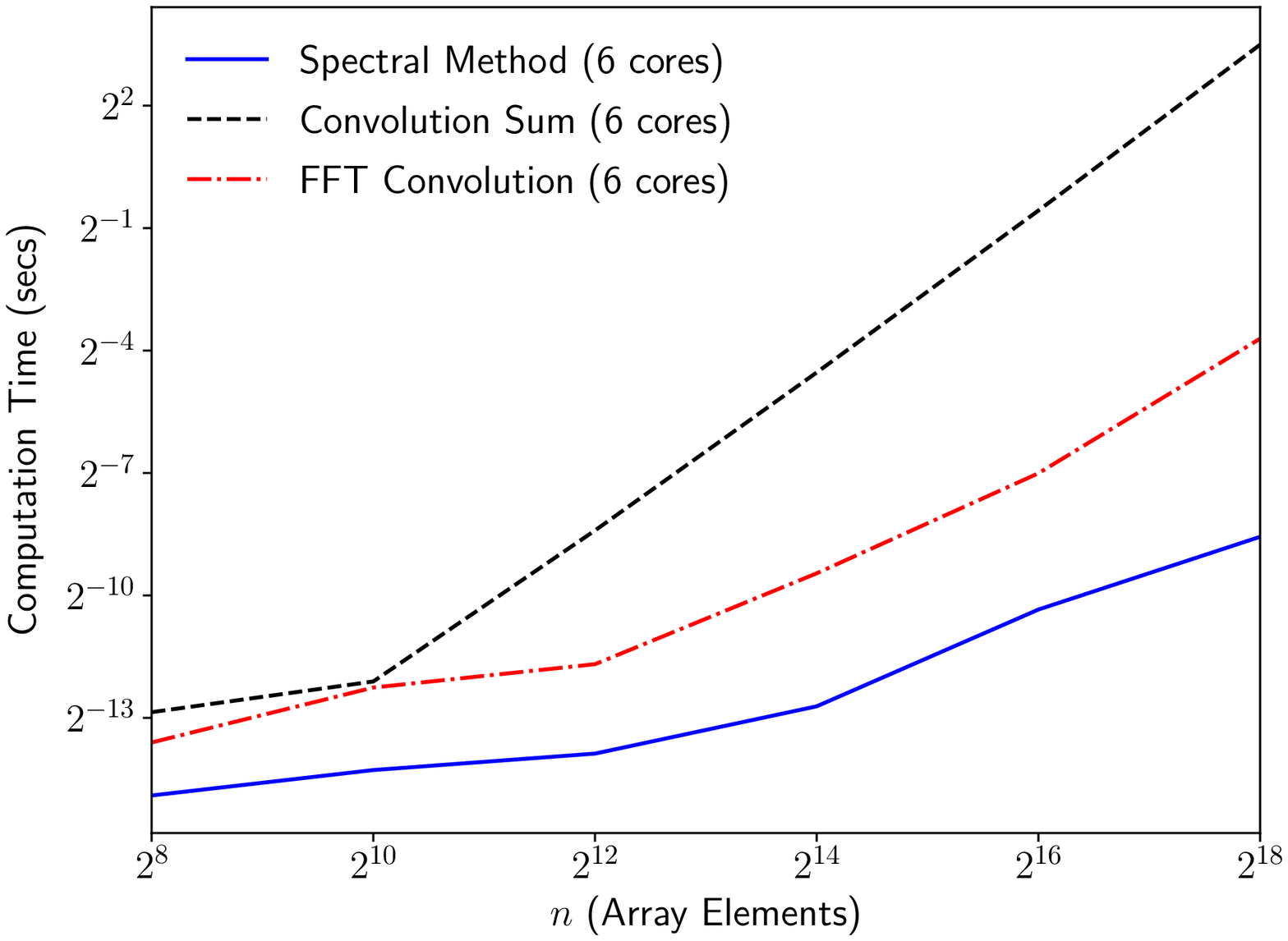}
        \caption{Computation Times for the non-local terms of (\ref{eq:euler}). (left) One dimension. 
        (right) Two dimensions. 
        The methods are comparable for very coarse grids, but spectral methods 
        rapidly become faster as more cells are added. 
        The number of cells scales quadratically with the domain size.}
        \label{fig:speed}
\end{figure}

\section{Higher Dimensions}
\label{Sec:HigherD}

The methodology outlined above is scalable and 
can be generalized to higher dimensions, as shown next.

\subsection{Screened Poisson Mediated Flocking in Radially Symmetric Domain}
It seems natural for the interaction function $\psi$ to be radially symmetric, 
which suggests that the domain $D$ has radial symmetry as well. 
In higher dimensions, i.e. for $d=2,3$, this results in singular kernels $\psi$
\cite{Ahn2012OnCI}.
Singular kernels have been extensively studied in the literature and, under
mild assumptions in the initial conditions, have been shown to result in flocking behavior
while, at the same time, avoiding collisions \cite{Ahn2012OnCI}.
In this case, we have the BVP of the augmented system of PDEs (\ref{eq:pdes})
defined in the radially symmetric domain $D:= \cbra{x\in\mathbb{R}^d:\|x\| < L/2}$, 
with the linear differential operator $\mathcal{L}_x$ defined as
\begin{equation}
\mathcal{L}_x= - k^{-d/2} ( \nabla_x^2 - \lambda^2)
\end{equation}
It can be shown (\cite{mavridis2020semi}) that 
this operator is associated with a Green's function of the form
\begin{align}
\psi(x,s) = \bar \psi(x-s) + \phi(x,s)
\end{align}
where $\bar \psi$ is given by
\begin{equation}
\begin{aligned}
\bar \psi(x,s) &= \tilde\psi(\|x-s\|) \\
&= \pbra{\frac{k}{2\pi}}^{d/2} \pbra{\frac{\lambda}{\|x-s\|}}^{d/2-1} 
K_{d/2-1}(\lambda \|x-s\|)
\end{aligned}
\end{equation}
with $K_\alpha(\cdot)$ being the modified Bessel function of the second kind of order $\alpha$, 
and 
%

\begin{equation}
\phi(x,s) = - \tilde \psi(\frac 2 L \|x\|\| s - \frac{L^2}{4} \frac{x}{\|x\|^2}\|).
\end{equation}

 

\subsection{Screened Poisson Mediated Flocking in Rectangular Domain}

The introduction of the operator $\mathcal{L}_x$ allows 
for the fast computation of the solution of (\ref{eq:euler}) by numerically 
solving 
(\ref{eq:pdes}). The hyperbolic and elliptic solvers
introduced in \ref{sSec:ComputationalMethods}, however, are computationally costly
when not working in a rectangular domain $D$. For this reason, we define 
the BVP of the augmented system of PDEs (\ref{eq:pdes})
with the same linear differential operator $\mathcal{L}_x$ defined as
\begin{equation}
\mathcal{L}_x= -\frac{1}{2k}(\nabla_x^2 - \lambda^2),~  k>0,\ , \lambda \in \mathbb R
\end{equation}
in a $d$-dimensional 
rectangular domain  $D:= \sbra{-\nicefrac{L}{2}, \nicefrac{L}{2}}^d$, $L>0$,
with homogeneous Dirichlet boundary conditions.
The intuition behind this selection is that $L$ can be chosen large enough 
to approximately negate the effect of the rectangular domain $D$ on 
the interaction function $\psi$ as shown in 
Fig. \ref{fig:psi_contours_2D}.

Notice that, as shown in Section \ref{sSec:Conservation}, 
$\mathcal{L}_x$ is an elliptic, self-adjoint (symmetric) partial differential operator 
that conserves mass and momentum. Therefore, in two-dimensions, the augmented system (\ref{eq:pdes}) takes the form:
\begin{equation}
\begin{cases}
\partial_t Q + \partial_x F(Q) + \partial_y G(Q) = S(Q, \Phi) \\
\mathcal L_x \Phi = Q
\end{cases}
\label{eq:pdes2}
\end{equation}
where $Q:=(\rho, m_1,m_2)^T$, $F:=u_1 Q$, $G:=u_2 Q$, and
$S:=(0, \rho \mathcal L_\psi m - m \mathcal L_\psi \rho)$.
System (\ref{eq:pdes2}) can be generalized to three dimensions in the obvious way.

In Fig. \ref{fig:density_contours_2D}, we illustrate the 
density and momentum density field of the solution of system (\ref{eq:pdes2})
for the initial conditions given in Section \ref{sSec:results2D}. 

%
\begin{figure}[H]
    \centering
    \includegraphics[trim=40 30 30 50,clip,width=0.5\textwidth]{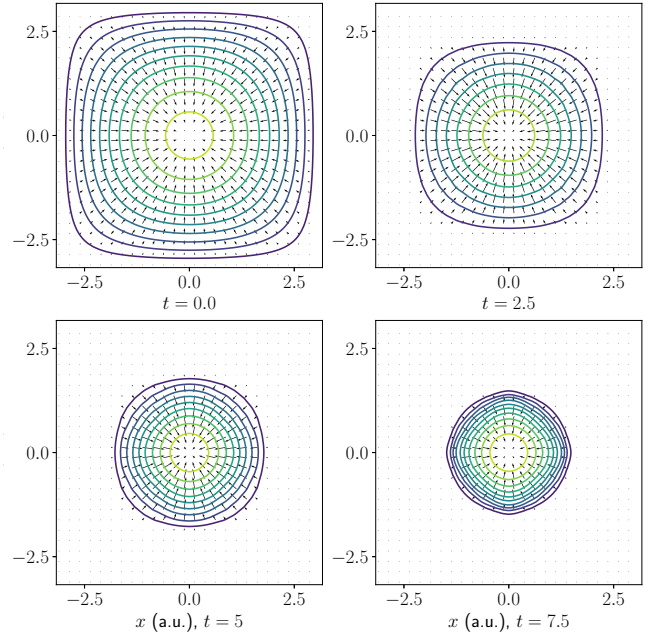}
    \caption{Density and momentum field as a solution of 
    system (\ref{eq:pdes2}) for the initial conditions given in Section \ref{sSec:results2D}. 
    The contours correspond to the density and the quivers to the momentum field. Here, $(k,\lambda) = (4,1)$.
    The timestamps of the solution are printed at the bottom of each plot.}
    \label{fig:density_contours_2D}
\end{figure}

In physics and computer graphics, this operator with $\lambda \neq 0$ is associated to the time-independent Klein-Gordon equation and the screened Poisson equation \cite{screenedPoissonEq}. In the square region $D:=(-\frac{L}{2},\frac{L}{2}) \times (-\frac{L}{2},\frac{L}{2})$ with homogeneous Dirichlet boundary conditions, the Green's function for $\mathcal L_x$ is given by the solution to
\begin{equation}
    \label{eq:slp2D}
    \begin{cases}
    \mathcal L_x \psi(x,s) = \delta(x-s) &\text{ for } (x,s) \in D \times D \\
    \psi(x,s) = 0 &\text{ for } (x,s) \in \partial D \times \partial D
    \end{cases}
\end{equation}
which is the Fourier sine series
\begin{equation}
\begin{aligned}
    \psi(x,s) &= 8k\sum_{m=1}^\infty \sum_{n=1}^{\infty} \frac{1}{\mu_{nm} + \lambda^2} \\
    &\quad\quad\quad \sin\frac{n\pi}{L}x_1'\sin\frac{n\pi}{L}s_1'\sin\frac{m\pi}{L}x_2'\sin\frac{m\pi}{L}s_2'
\end{aligned}
\end{equation}
where 
$\mu_{n,m} = (\frac{n\pi}{L})^2 + (\frac{m\pi}{L})^2$, and $s_i' = s_i + \frac{L}{2}$, and similarly for $x'$ i.e. a translation of coordinates. 
This may be easily verified by separation of variables, or simply computing $\mathcal L_x \psi$. It is obvious $\psi(x,s)$ is symmetric in its arguments, and that it is singular along $x=s$. Moreover, via Hopf's maximum principle \cite{evans1998partial, kreith1968}, it is clear immediately that $\psi(x,s)>0$ $\forall (x,s) \in D \times D$. 
So, following results in \cite{shvydkoy2017eulerian, Ahn2012OnCI}, $\psi(x,s)$ can be shown to induce flocking dynamics, as well as collision avoidance. 


Numerical approximations to the Green's function $\psi$ computed via a spectral method are presented here. The behavior of this Green's function is similar to the 1D Green's function in $k,\lambda$, although in the 1D case, the Green's function has a simple closed-form, and is nonsingular. 
In Fig. \ref{fig:psi_contours_2D}, we illustrate
the effect of the parameters $k$, $\lambda$ on the 
profile of the 2D interaction function. 
The parameter $k$ has an obvious effect on scaling, 
and $\lambda$ has
the effect of increasing its growth rate. There are singularities along $x=s$.

\begin{figure}[ht!]
        \centering
        \includegraphics[trim=25 25 30 50,clip,width=0.48\textwidth]{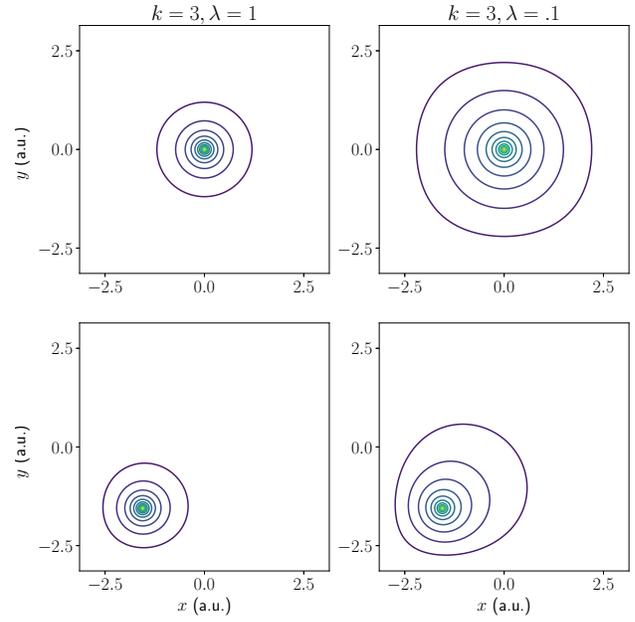}
        \caption{The effect of the parameters $k$, $\lambda$ on the 
        profile of the interaction function $\psi$ in the 
        two-dimensional rectangular domain  $D:= \sbra{-\pi, \pi}^2$. 
        The parameter $k$ has an obvious effect on scaling and  
        $\lambda$ has the effect of increasing its growth rate. 
        There are singularities along $(x,y) = (s_1,s_2)$. In the first row, 
        $(s_1,s_2) = (0,0)$ and in the second row, $(s_1,s_2) = -(\frac{\pi}{2}, \frac{\pi}{2})$.}
        \label{fig:psi_contours_2D}
\end{figure}

\subsection{Computational Methods}
\subsubsection{Hyperbolic Solver}
As in the 1D case, we apply the finite volume method \cite{leveque_2002} to convert the hyperbolic PDE system (\ref{eq:pdes}) into a system of ODEs on cells $I_{ij}$ centered on sequence of points $\{(x_i,y_j) \}_{i=1,j=1}^{N_s,N_s}$ evenly spaced with spacing $\Delta x = \Delta y$:
\begin{equation}
\begin{aligned}
        \frac{d}{dt} \bar Q_{i,j} &= 
    -\frac{1}{\Delta x}(F_{i+\frac{1}{2},j}  - F_{i-\frac{1}{2},j}) \\
    &\quad\quad\quad -\frac{1}{\Delta y}(G_{i,j+\frac{1}{2}}  - G_{i,j-\frac{1}{2}}) + 
    \bar S_{i,j}
\end{aligned}
\end{equation}
The interpolated fluxes are given by the Kurganov-Tadmor fluxes analogously to the 
1D case (\cite{KURGANOV2000241}).
%
We use the same time integration and limiting procedure as we do in 1D. The form of the TVD limiter of the Kurganov-Tadmor flux changes slightly in 2D. See \cite{KURGANOV2000241} for details.
 
\subsubsection{Elliptic Solver}
To solve the elliptic equations, we employ spectral methods, as in 1D. 
A basis for $L^2((0, L)^2;\mathbb R)$ with zero BCs is 
\begin{equation}
    \cbra{b_{n,m}(x,s):=\sin \frac{n\pi x}{L} \sin \frac{m\pi s}{L} }_{n=1,m=1}^{\infty,\infty}
\end{equation}
As in the 1D case, candidate solutions to the elliptic BVP for fixed $t$ are Fourier sine series:
\begin{equation}
    \phi(x,s,t) = \sum_{n,m=1}^\infty \hat \phi_{n,m}(t)b_{n,m}(x',s'),
\end{equation}
where $s' = s + \frac{L}{2}$ and similarly for $x'$.
Now, we apply the 2D operator $\mathcal L_x$ to $\phi$, which yields:
\begin{equation}
    \sum_{n,m=1}^\infty \frac{1}{2k}(\mu_{n,m} + \lambda^2)\hat \phi_{n,m}(t)b_{n,m}(x',s') = q(x',s',t)
\end{equation}
Now, let $\hat q_{n,m}(t)$ denote the $n,m-$th Fourier sine coefficient for $q(x',s',t)$. Considering an approximation to $\phi$ with $N_s^2$ harmonics ($N_s$ in each direction) corresponding to the same $N_s$ as in the hyperbolic solver, we obtain the semi-discrete spectral method:
\begin{equation}
    \hat \phi_{n,m}(t) = \frac{2k \hat q_{n,m}(t)}{\mu_{n,m} + \lambda^2}, \text{ } 1 \leq n \leq N_s, 1 \leq m \leq N_s
\end{equation}
where $\mu_{n,m} := (\frac{n\pi}{L})^2 + (\frac{m\pi}{L})^2$ are the eigenvalues of the Laplacian with Dirichlet BCs. We apply the multidimensional extension of the transforms used in the 1D case to implement this spectral method.

\begin{remark}
We note that, compared to Remark \ref{rmk:1D}, in higher dimensions, i.e., 2D and 3D, 
one can take advantage of the divide-and-conquer approach of FFTs as well as parallelization. 
While a direct 2D convolution sum has complexity $O(N_s^4)$, since one needs to compute a double-sum for each grid point desired, 
the FFT-based elliptic solver has complexity $O(2N_s^2 \log(N_s) + N_s^2)$. 
Please refer to Fig. 2 for quantitative results.
\label{rmk:2D}
\end{remark}




\section{Learning the Coordination Laws}
\label{Sec:density_learning}

We utilize the methodology and 
the computational methods described above to 
efficiently compute the macroscopic quantities, 
i.e. the momentum and density of the swarm, as a solution to the 
augmented system of equations (\ref{eq:pdes}). 
We now incorporate the computation of the swarm's momentum and density
in an iterative learning scheme to estimate the parameters of the 
interaction function $\psi$.


We formulate the process of learning the interaction function $\psi$ from 
density data as a PDE-constrained optimization problem:
%
%
%
\begin{equation}
    \min_{k,\lambda}
    \sum_{\tau=t_0}^{t_f} D_{KL}
    (P^*(\tau) || P(\tau))
    \label{eq:density_learning}
\end{equation}
where $P^*(t)$ and $P(t)$ 
are probability measures that have densities $\rho^*(t,\cdot)$ and $\rho(t,\cdot)$,
the observed and simulated mean-field densities, respectively. 
The density $\rho^*$ is assumed given by observation.
The mean-field density $\rho$ 
associated with $P$, is subject to the system of PDEs (\ref{eq:pdes})
and therefore dependent on the parameter vector $\theta:= (k, \lambda)$.
The Kullback-Leibler (KL) divergence $D_{KL}$ in (\ref{eq:density_learning}) is given by:
\begin{equation}
    D_{KL}(P_i||P_j):= \int_{\Omega} \log_2 \frac{dP_i}{dP_j}dP_i=\int_{D} \rho_i\log_2 \frac{\rho_i}{\rho_j}dx
\end{equation}
The values of $\rho(t,\cdot), \rho^*(t,\cdot)$ 
are evaluated at the sequence of points $x_k$ 
generated by the finite volume method as described above, i.e. 
an approximation (more precisely, a piecewise-constant discretization) 
of the densities is needed, which is either observed or computed by trajectory observations
(see Section \ref{Sec:Results}).
We approximate the solution $\theta^*$ 
of (\ref{eq:density_learning}) with 
respect to 
$V_d(\theta) := \sum_{\tau=t_0}^{t_f} D_{KL}
    (P^*(\tau) || P(\tau))$,
with the iterative scheme
\begin{equation}
\theta^{n+1} = -\hat{\mathbf{H}}^{-1}(\theta^n)\nabla_\theta V_d(\theta^n)
\end{equation}
where $\hat{\mathbf{H}}$ is a positive-definite approximation of the Hessian computed via the Lanczos iteration \cite{nash}. The gradient is computed by the usual two-point finite difference formula. The KL divergence is approximated by a Riemann sum over the support of the observed density 
which is sampled over the same grid of points as the approximated density.

We note that in each iteration of the learning algorithm, 
the solution of the BVP associated with the system of PDEs (\ref{eq:euler}) must be numerically computed,
which has become feasible 
due to the computational advantages originating from 
the use of the proposed linear operator $\mathcal{L}_x$ in (\ref{eq:Lx}) 
(see Remarks \ref{rmk:1D}, \ref{rmk:2D}).
%
%


\subsection{Learning the Interaction Function from Particle Trajectories}
\label{sSec:particle_learning}

In order to better understand the computational advantages of the proposed 
methodology, we compare it here with a standard 
learning approach using trajectory data of the 
position and velocity $\cbra{(x_i,v_i)}_{i=1}^N$ of each particle for some large 
but finite number of particles $N$.
In general, this problem is a nonlinear system identification problem
with known system form given by (\ref{eq:cs}) and unknown 
interaction function ${\psi:\mathbb{R}^d \times \mathbb{R}^d \rightarrow \mathbb{R}}$.

Because of the nonlinearity of (\ref{eq:cs}) and the dependence of the right hand side
on every pair $\pbra{(v_i,v_j)}$ and $\pbra{(x_i,x_j)}$, $i\neq j$, 
system identification requires the solution of an ODE-constrained optimization problem
of dimension $O(N^2)$, which has a complexity of $O(N^2 N_s^2)$.
As a result, it is apparent that for large number of particles $N\gg N_s$,
the proposed mean-field methodology is quite faster (see Remarks \ref{rmk:1D}, \ref{rmk:2D}).

An energy-based approach is given in \cite{Matei2019}, where it is shown that 
the Cucker-Smale model (\ref{eq:cs})
is equivalent to a fully connected \textit{N}-dimensional network of generalized mass-spring-dampers with appropriately defined Hamiltonian functions, 
that can be written in an 
input-state-output port-Hamiltonian form \cite{h2}: 
\begin{equation}
\dot{z}=[J(z) - R(z)] \pder{H(z)}{z}
\label{eq:PH}
\end{equation}
where $z=(q,p)$, with $q,p\in\mathbb{R}^{\frac{N(N-1)}{2}}$ being 
the vectors of relative distances and momenta between 
each pair of particles, and 
the quantities $J=-J^T$, $H$ and $R$ are appropriately defined. 
The dependence of (\ref{eq:PH}) on the interaction function 
$\psi$ is introduced by the resistive term $R=R(\psi)$ \cite{Matei2019}, 
and is modeled as an artificial neural network with a single hidden layer. 
The parameters are represented by a vector $\theta$ and the 
learning process is formulated as a least-squares 
optimization problem 
\begin{equation}
    \min_{\theta}
    \sum_{\tau=t_0}^{t_f} 
    \| \dot z^*(\tau) - \dot z(\tau) \|^2
    \label{eq:particle_learning}
\end{equation}
where $z^*$ represent the observed trajectories, 
and $z$ are subject to (\ref{eq:PH}), and 
the solution $\theta^*$ 
of (\ref{eq:particle_learning}) with 
respect to $V_p(\theta) :=     \sum_{\tau=t_0}^{t_f} 
    \| \dot z^*(\tau) - \dot z(\tau) \|^2$, is approached
by an iterative gradient descent method
\begin{equation}
\theta^{n+1} = \theta^n - \alpha_n( \nabla_\theta V_p(\theta^n) ),~
n=0,1,2,\ldots 
\end{equation}
where the iteration maps 
$\alpha_n:\mathbb{R}^2 \rightarrow \mathbb{R}^2$, ${n\geq 0}$
are defined in accordance with the Adam method of moments
for stochastic optimization \cite{kingma2014adam}, 
and the computation of the gradient vectors is implemented 
using automatic differentiation \cite{autograd}. 
%
%
It is clear that the dimension of the dynamical system to be solved within the optimization problem grows quadratically with the number of particles $N$, 
affecting the scalability of such approaches.
Moreover, the quality of the observed trajectory data is crucial for the 
performance of the learning algorithm.


We note, however, that
there is a potential advantage in using the proposed interaction function model 
(\ref{eq:psi}), (\ref{eq:psi_coef}),
even in learning the interaction dynamics of a swarm 
from particle trajectories. 
First, the number of parameters to be estimated is greatly reduced, 
compared to a general regression function such as
a neural network \cite{Matei2019}, 
or a mixture of Gaussians
\cite{mao2019nonlocal}, 
which reduces the amount of data required for convergence.
Secondly, every update in the optimization algorithm 
improves the estimate of the interaction function over the entire
domain $D$, and not only over a small subset $D_o\subset D$
where the distances between each pair of interacting particles 
happen to be observed. 
This can result in faster, and, more importantly, robust estimation 
of the interaction function.

Finally, as an alternative to solving (\ref{eq:particle_learning}), 
in case observations of the particle trajectories are available, 
we can always numerically integrate to approximate $\rho$ and use this approximation
in our density-based learning algorithm. 
We will follow this approach when comparing the experimental results 
in the one-dimensional case in Section \ref{Sec:Results}.

\section{Numerical Results}
\label{Sec:Results}

\subsection{One-dimensional Case}

We illustrate our results 
in the domain 
$D=[-\pi,\pi]$ ($L=2\pi$), with 
initial density and bulk velocity given by
\begin{align}
\rho_0(\hat x) &= \frac{\pi}{2L} \cos\frac{\pi \hat x}{L}, \\ 
u_0(\hat x) &= - \sin\frac{\pi \hat x}{L},\ \hat x\in D,\
c>0
\label{eq:initial_conditions_1D} 
\end{align}
i.e. assuming that $\rho_0(\hat x) = u_0(\hat x) = 0,\ \forall \hat x\notin D$, where $\hat x$ is as defined in (\ref{eq:fluctuations}). 
In order to accurately evaluate the learning scheme defined in
Section \ref{Sec:density_learning}, 
we obtain the empirical density evolution data $\rho*$ 
by first simulating the particle equations (\ref{eq:cs})
with initial conditions randomly generated from the 
initial density and bulk velocity 
(\ref{eq:initial_conditions_1D}), 
and then taking the piecewise-constant density discretization
\begin{equation}
    \rho^*[t_i,x_s]: = \frac{1}{N_s\lambda(I_j)}\mu(\{ x_k([t_i] \in I_j\})
\end{equation}
where $\lambda(\cdot)$ is the Lebesgue measure, $\mu(\cdot)$ is the counting measure, and $I_i$, $x_j$ are defined as in the formulation of the finite volume method (Section \ref{sSec:ComputationalMethods}).
To showcase the robustness of our approach to noisy observations, 
we add a Gaussian noise $\epsilon_n \sim N(0,\sigma_n^2)$
with $\sigma_n^2=1$ to the trajectory data.
We choose an interaction function $\psi^*$ 
of the form (\ref{eq:psi}), (\ref{eq:psi_coef})
with $(\hat k^*,\hat \lambda^*)=(4,1)$. 
%
%
The system of particle equations  
is numerically solved using the velocity Verlet algorithm \cite{mao2019nonlocal},
which, given 
a system of ODEs of the form
\begin{equation}
    \begin{cases}
    \frac{dx}{dt} &= v \\
    \frac{dv}{dt} &= a(x,v,t),
    \end{cases}
\end{equation}
with appropriate initial conditions 
and a time-discretization at steps $\{0,1,...,i,...\}$ with increment $\Delta t$, takes the form
\begin{equation}
    \begin{split}
    v_{i+\frac{1}{2}} &= v_i + \frac{1}{2}a(x_i,v_i, t_i)\Delta t\\
    x_{i+1} &= x_i + \Delta t v_{i+\frac{1}{2}} \\
    v_{i+1} &= v_i + \frac{\Delta t}{2}[a(x_i,v_i,t_i) + a(x_{i+1}, v_{i+\frac{1}{2}}, t_{i+1})].
    \end{split}
\end{equation}
The agreement between
the solutions of the particle model (\ref{eq:cs}) and 
the macro-scale model (\ref{eq:pdes}) for 
$N_s=2 \cdot 10^4$, $\Delta t = .01$, and cell $\Delta x = \frac{2\pi}{101}$,
and  is shown in 
Fig. \ref{fig:macros_micro}.
The training error 
and the reconstructed interaction function are depicted 
in Fig. \ref{fig:training_ours_1D}.
The parameters $(\hat k^*,\hat \lambda^*)=(3.98721701,0.98546559) \sim (4,1)$ 
of the interaction function
$\psi$ were recovered and the Newton's iteration converged in 11 iterations.

\begin{figure}[ht!]
        \centering
        \includegraphics[trim=20 360 40 80, clip,width=0.4\textwidth]{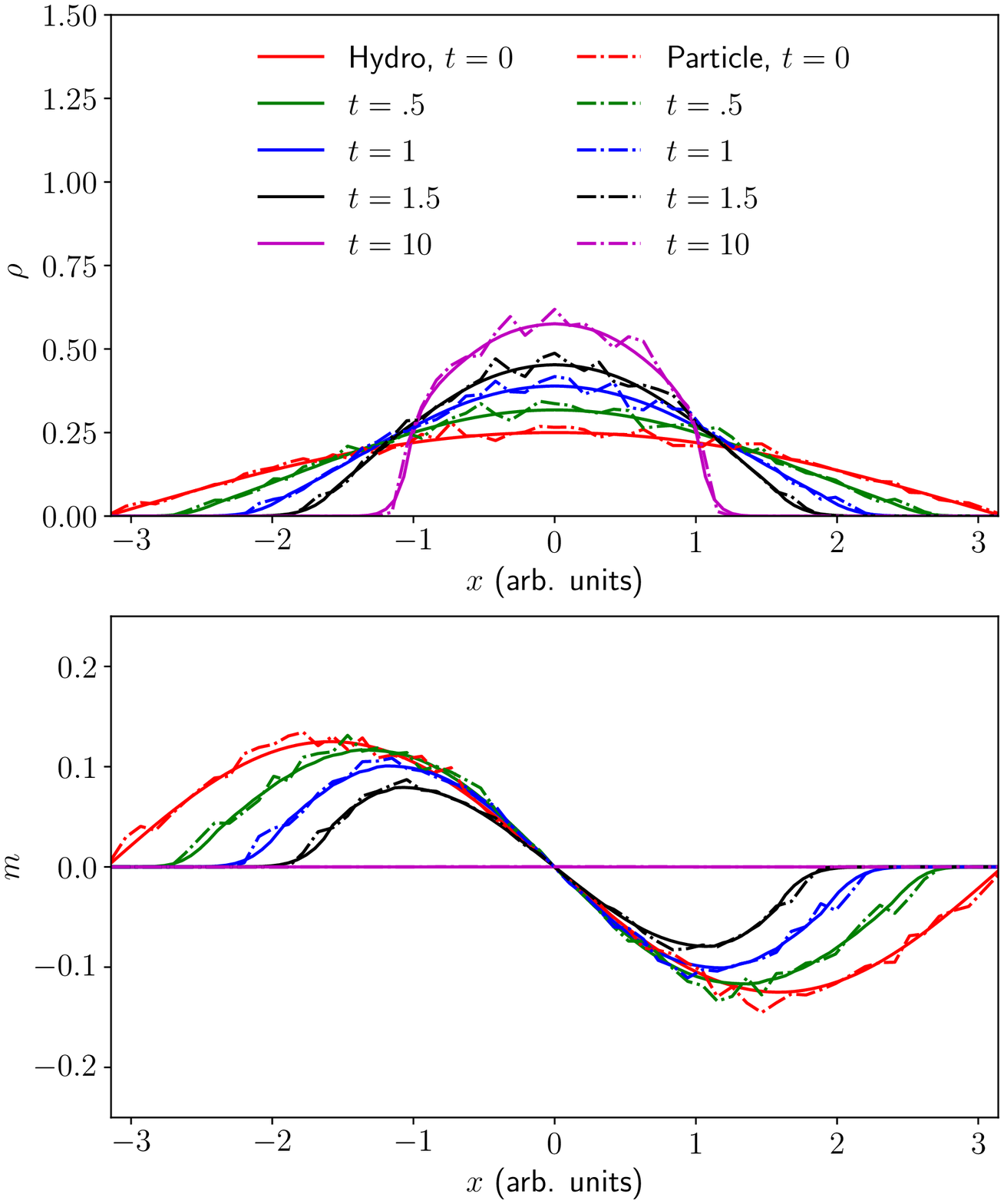}
        \vspace{-.5em}
        \caption{Evolution of the one-dimensional densities $\rho(t,\hat x)$ 
        as computed by solving the macro-scale model and the particle model (dashed-line).}
        \label{fig:macros_micro}
\end{figure}

\begin{figure}[h]
    \centering
    \includegraphics[trim=15 5 50 45, 
    clip,width=0.24\textwidth]{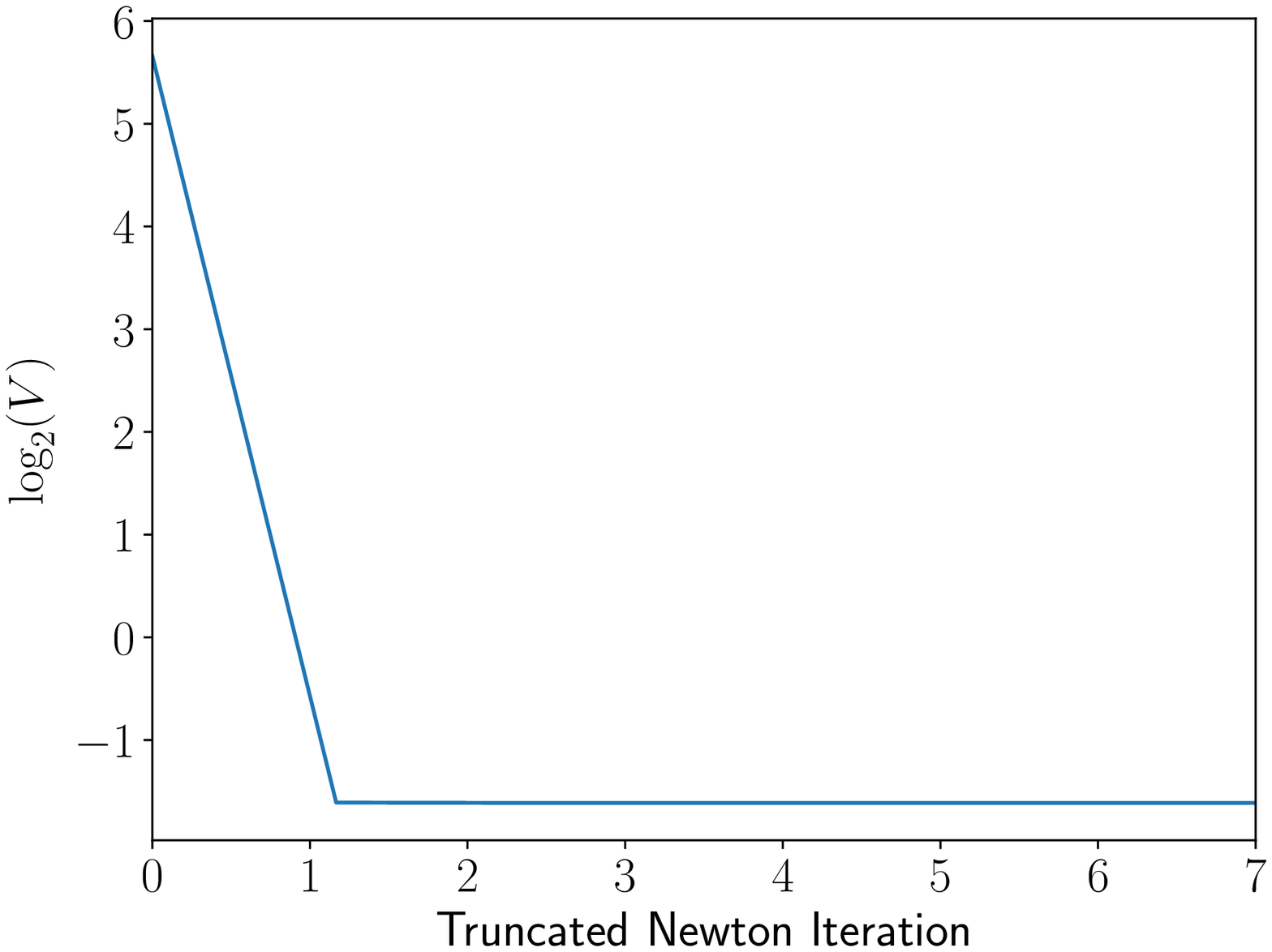}
    \includegraphics[trim=15 5 55 45, 
    clip,width=0.24\textwidth]{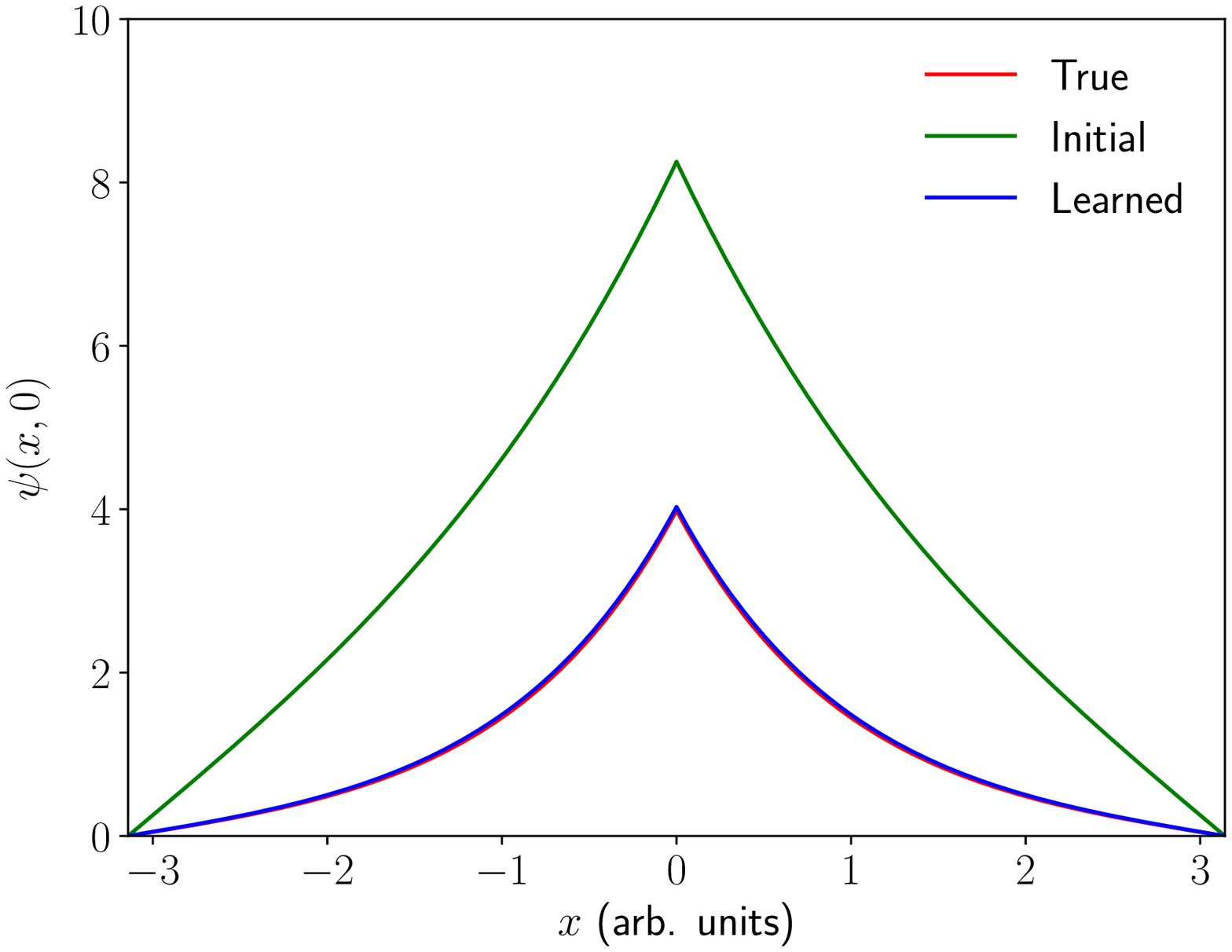}
    \caption{(left) 
    Training error 
    for the one-dimensional learning algorithm using
    observations of density evolution data. 
    (right) Reconstruction of the interaction function $\psi$.  
    Observed data generated by simulating the Cucker-Smale model (\ref{eq:cs})
    with the proposed interaction function $\psi^*$ as in 
    (\ref{eq:psi}), (\ref{eq:psi_coef})
    with $( k^*, \lambda^*)=(4,1)$. 
    }
    \label{fig:training_ours_1D}
\end{figure}

\begin{figure}[h]
    \centering
    \includegraphics[trim=15 0 40 45, 
    clip,width=0.24\textwidth]{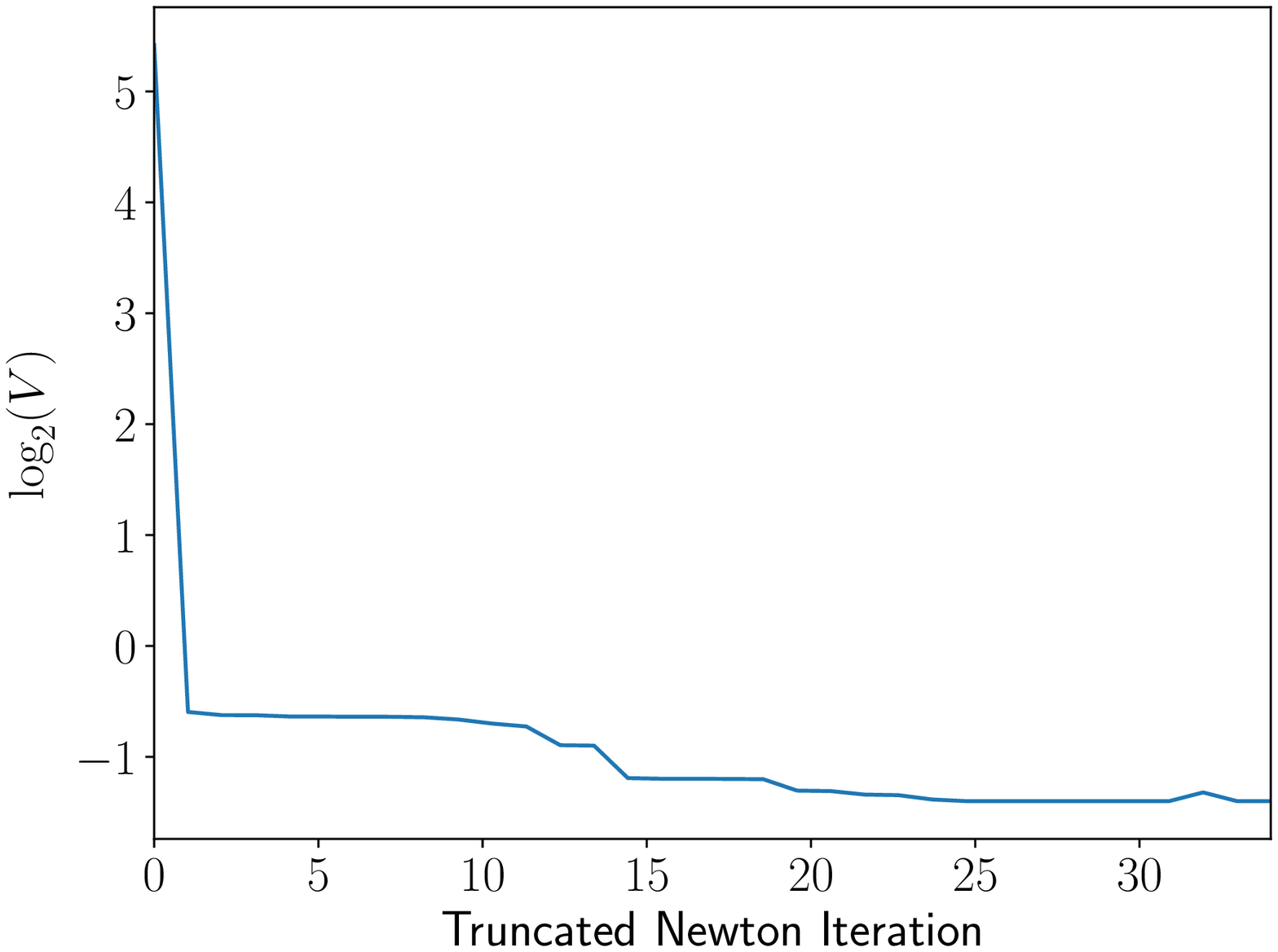}
    \includegraphics[trim=15 5 55 45, 
    clip,width=0.24\textwidth]{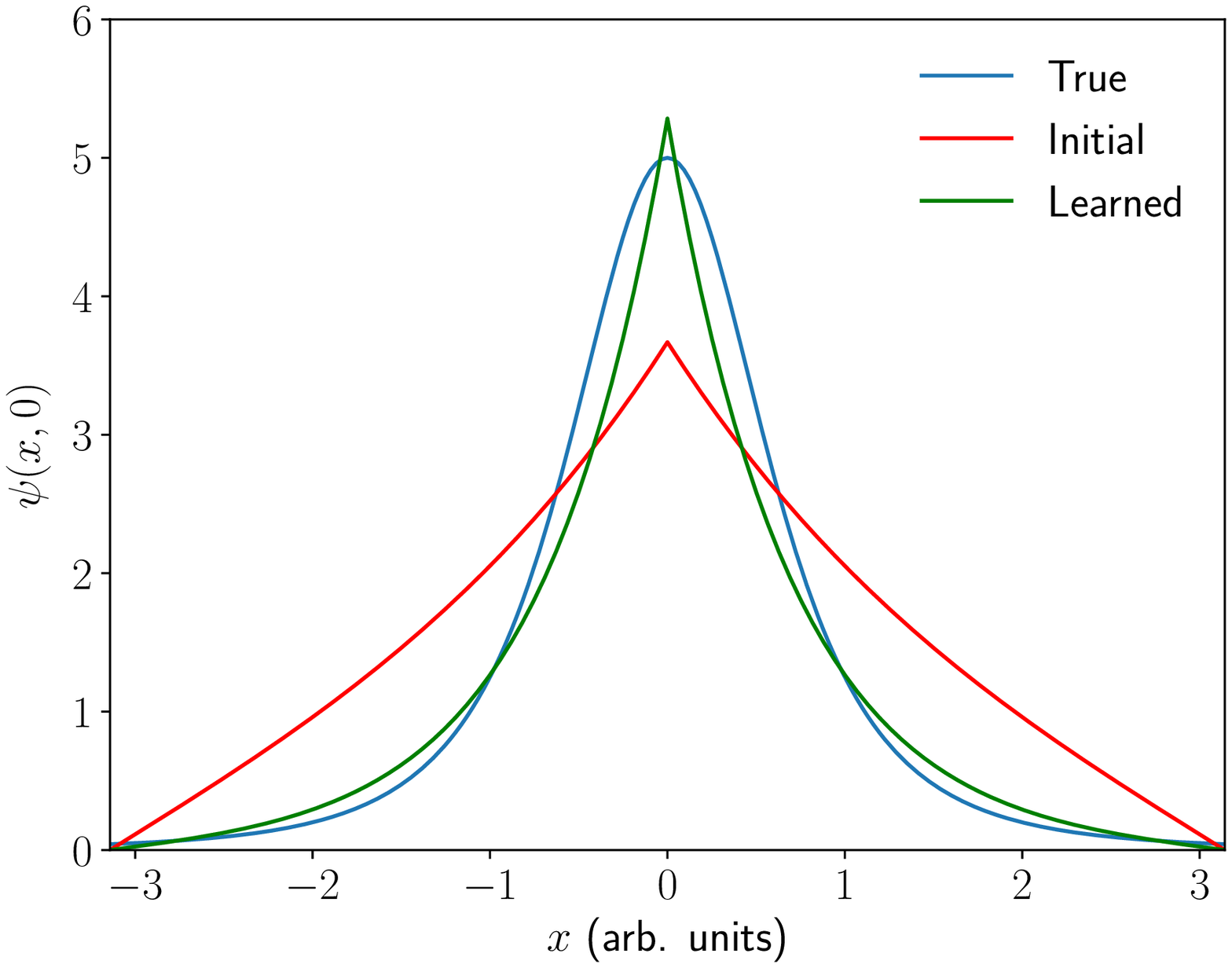}
    \caption{(left) 
    Training error 
    for the one-dimensional learning algorithm using
    observations of density evolution data. 
    (right) Reconstruction of the interaction function $\psi$.  
    Observed data generated by simulating the Cucker-Smale model (\ref{eq:cs})
    with the original interaction function $\psi^*$ in (\ref{eq:cs_psi_original}),
    with $( K^*, \gamma^*)=(5,2)$.
    }
    \label{fig:training_cs_1D}
\end{figure}

As a second experiment, in order to illustrate the expressiveness 
of the proposed family of interaction functions
(\ref{eq:psi}), (\ref{eq:psi_coef}), and assess the generalizability of the proposed
methodology, we obtain the ``observed'' density evolution 
by simulating system (\ref{eq:cs}) with the original Cucker-Smale interaction function 
\begin{equation}
\psi_{CS}(x,y) = \psi^*(x-y) = \frac{K^*}{(1+\|x-y\|^2)^{\gamma^*}}    
\label{eq:cs_psi_original}
\end{equation}
for $( K^*, \gamma^*)=(5,2)$.

The training error 
and the reconstructed interaction function are depicted 
in Fig. \ref{fig:training_cs_1D}.
We observe that the reconstruction is not ideal but closely resembles 
the original interaction function, while the reconstruction error of the density
evolution of the swarm is negligible. 
These results validate our hypothesis that the proposed interaction functions 
can model a wide range of collective behaviors, mostly because the 
model parameters can control the pairwise communication
of the swarm's agents without affecting the flocking behavior.



We note that problem (\ref{eq:density_learning}) is generally a 
non-convex optimization problem, and may 
be sensitive to initial estimates of the parameters 
$(k,\lambda)$ leading to sub-optimal solutions
$(\hat k^*,\hat \lambda^*)\neq (k^*,\lambda^*)$. 
In addition, the discretized objective function for the densities $V_d$
may approach very small values although
$(\hat k^*,\hat \lambda^*)\neq (k^*,\lambda^*)$, 
suggesting that, for a given set of observation data, 
certain non-global minima of 
(\ref{eq:density_learning}) can 
produce an accepted solution 
for the underlying interaction function of the swarm.
%
%
In this case, 
the reconstructed parameters $(\hat k^*,\hat \lambda^*)$ 
can be used 
to accurately reconstruct the 
actual observed trajectories.
%
%

\subsection{Two-dimensional Case}
\label{sSec:results2D}

We illustrate our results 
in the domain 
$D = [-\pi,\pi]\times[-\pi,\pi]$,
i.e. for $L=2\pi$, with 
initial density and bulk velocity given by
\begin{align}
\rho_0(\hat x, \hat y) &= \frac{\pi^2}{4L^2} \cos\frac{\pi \hat x}{L}\cos\frac{\pi \hat y}{L}, \\ 
u_0(\hat x, \hat y) &= -\frac{1}{4} (\sin\frac{\pi \hat x}{L},  \sin\frac{\pi \hat y}{L})^T, \hat x, \hat y \in [-\pi,\pi]
\label{eq:initial_conditions_2D} 
\end{align}
i.e. assuming that $\rho_0(\hat x) = u_0(\hat x) = 0,\ \forall \hat x\notin D$.
We note that these initial conditions and compact domain, again refer to 
the fluctuation variables $\hat x$ defined in (\ref{eq:fluctuations}).

In the two-dimensional case, 
we obtain the density data observations by directly solving the 
mean-field equations (\ref{eq:euler}) for two different Cucker-Smale models.
First we solve (\ref{eq:pdes2}) with the operator $\mathcal{L}_x$
as defined in (\ref{eq:Lx})
for $(\hat k^*,\hat \lambda^*)=(4,1)$.
An illustration of the density $\rho^*$ and momentum $m^*$ evolution over time is given 
in Fig. \ref{fig:density_contours_2D}. 
The training error for our learning scheme is depicted 
in Fig. \ref{fig:training_ours_2D}.
The parameters $k,\lambda$ were estimated as $(\hat k^*,\hat \lambda^*)=(4.01514
, 1.00194)$.

\begin{figure}[h]
    \centering
    \includegraphics[trim=5 0 35 40, 
    clip,width=0.24\textwidth]{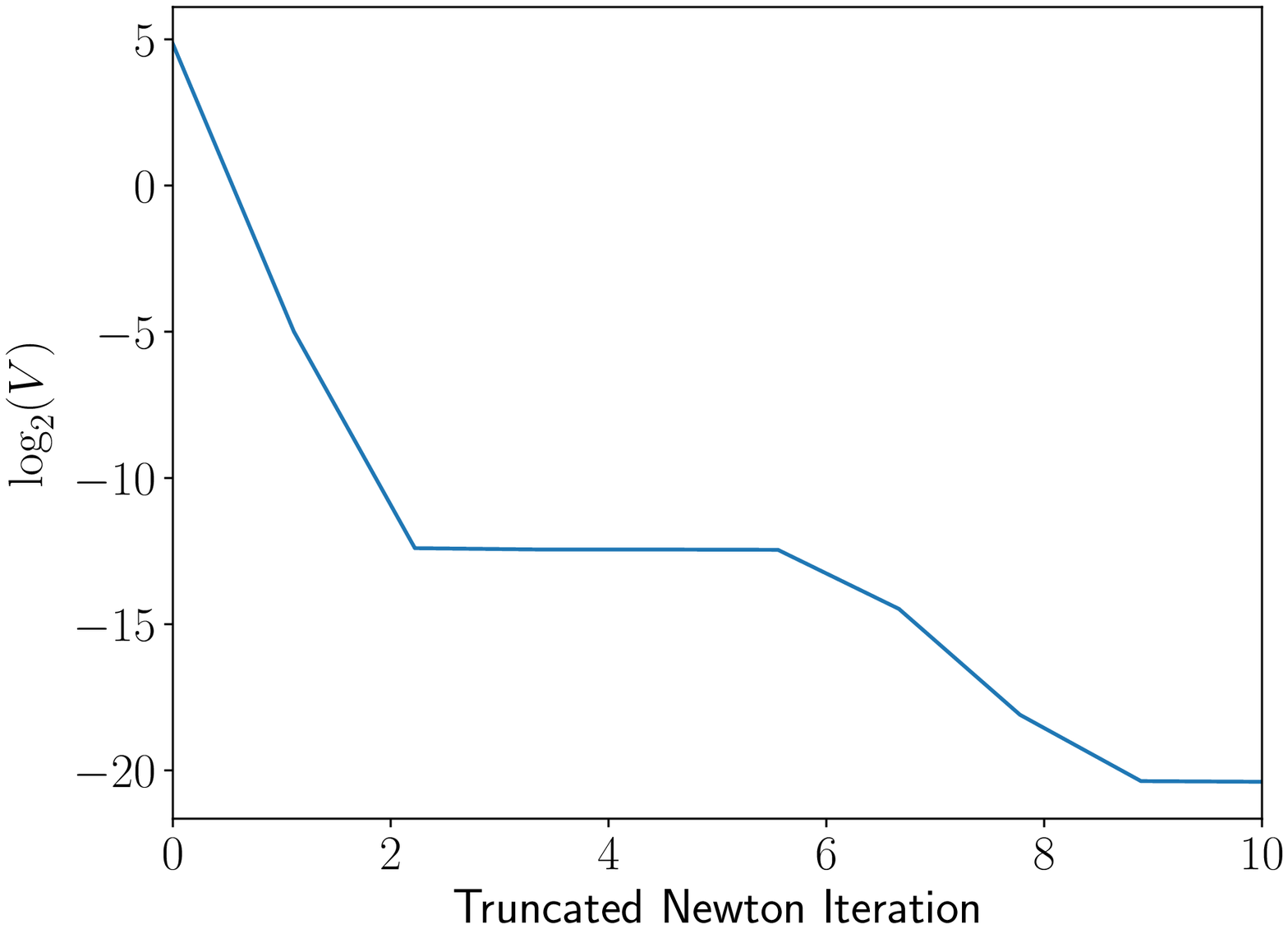}
    \includegraphics[trim=5 0 35 40, 
    clip,width=0.24\textwidth]{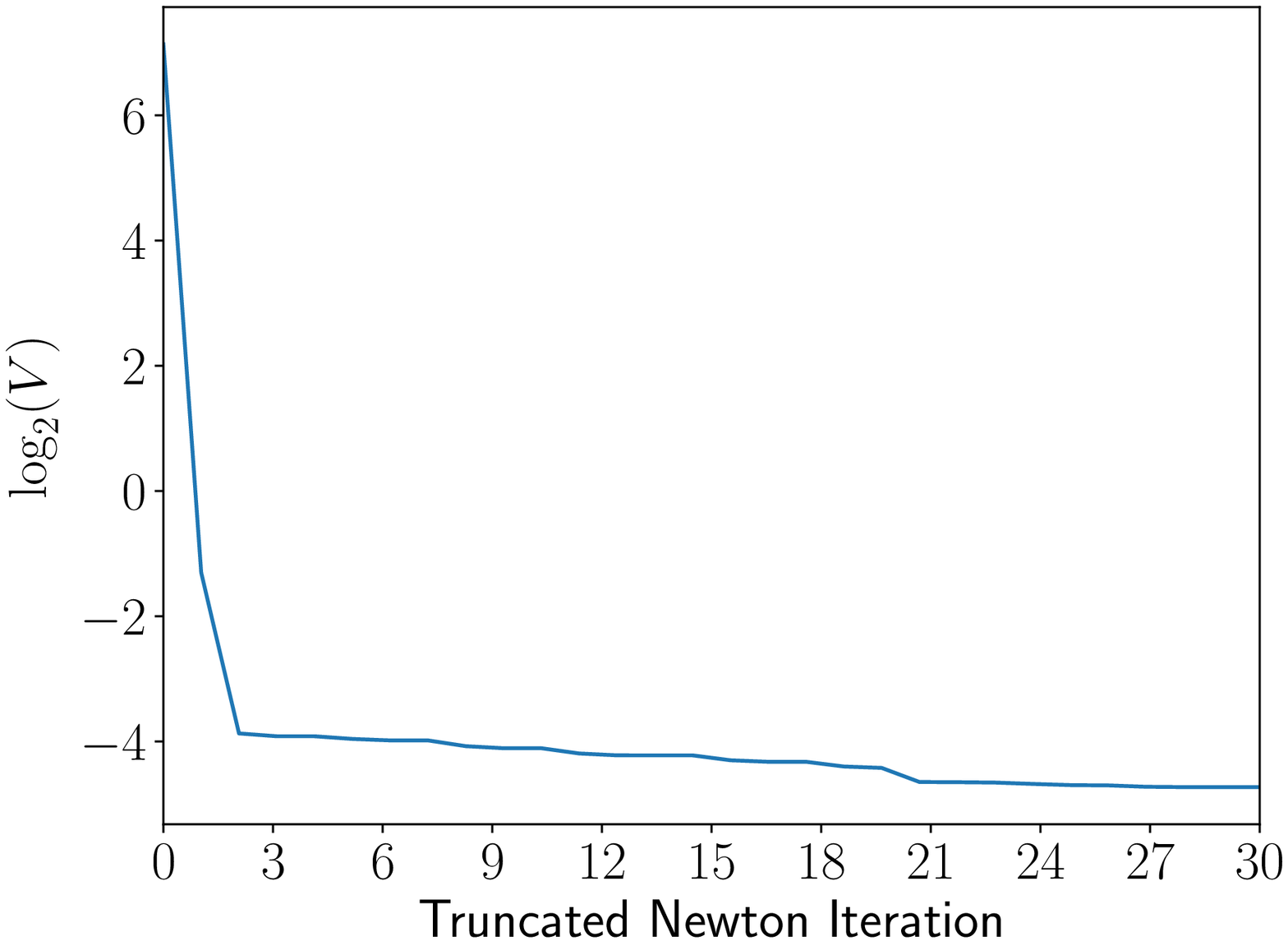}
    \caption{%
    Training error 
    for the two-dimensional learning algorithm. 
    (left)
    Observed density data generated by simulating the system of PDEs (\ref{eq:pdes2}), 
    with the operator $\mathcal{L}_x$ in (\ref{eq:Lx}) and $(k^*, \lambda^*) = (4, 1)$.
    (right)
    Observed density data generated by simulating the 2D Euler equations (\ref{eq:euler})
    with the original Cucker-Smale interaction function $\psi^*$ in 
    (\ref{eq:cs_psi_original}) for $( K^*, \gamma^*)=(5,2)$.
    Cost $V$ is plotted in $\log_2$ scale.
}
    \label{fig:training_ours_2D}
\end{figure}

Similar to the one-dimensional case, we test the 
generalizability of the proposed methodology, 
by obtaining the density evolution observations
by directly solving the Eulerian equations (\ref{eq:euler})
with the original Cucker-Smale interaction function $\psi^*$ in 
(\ref{eq:cs_psi_original}) for $( K^*, \gamma^*)=(5,2)$.


For the integral parts of (\ref{eq:euler}) of the form
\begin{equation}
    \phi(x,t) = \int_{D}\psi^*(x-s)q(s,t)ds = \int_{\mathbb R^2}\psi^*(x-s)q(s,t)ds,
\end{equation}
where $\psi^*$ are square-integrable kernels and $q$ compactly supported on square region $D$, we employ Fourier transform-based convolution. Via the properties of the convolution \cite{evans1998partial}, and denoting $\hat \phi(\xi, t), \hat \psi(\xi), \hat q(\xi,t)$ as the Fourier coefficients of the given functions, we apply the Fourier transform in $\mathbb {R}^2$, which gives:
\begin{equation}
    \hat \phi(\xi, t) = \hat \psi(\xi) \hat q(\xi,t).
\end{equation}
To implement this formula, we use the usual $2N_s$ zero-padded FFT on the regularly spaced points given in the hyperbolic solver to compute the coefficients and approximate the convolution integrals. This prevents circular convolution.

The training error is depicted 
in Fig. \ref{fig:training_ours_2D}.
Similar to the results  
in Fig. \ref{fig:training_cs_1D},
we expect that the reconstruction of the interaction function 
may not be ideal but can closely approximate 
the original interaction function, while the reconstruction error of the density
evolution of the swarm gets minimized. 
These results validate our hypothesis that the proposed interaction functions 
can model a wide range of collective behaviors in multi-dimensional space.


\section{Conclusion and Discussion}
\label{Sec:Conclusion}

We have considered the problem of understanding 
the coordinated movements of biological or artificial swarms.
While current
learning methodologies
mainly use agent-based models, accurate observations 
of the position and velocity trajectories
of each agent are required.  
Because of the difficulty to extract such observations in real life, 
we have proposed a learning scheme to reconstruct the 
coordination laws of the interacting agents 
from observations of the swarm's density evolution over time. 
%
We believe that developing learning algorithms based on the
macroscopic quantities of the swarm can play an important role 
in the analysis of collective motion
and has mainly been inhibited due to the computational expense
of solving the corresponding mean-field hydrodynamic equations.
%
%
%
%
%
The results of this work can 
be used to model and understand biological and artificial flocks, 
and design controllers for large networked systems and robotic swarms.
Moreover, the identification of the coordination laws of an observed swarm
through its density evolution over time, 
can lead to the development of fast defensive mechanisms against 
adversarial swarm attacks.

\bibliographystyle{IEEEtran} %
\bibliography{bib_pde_flocking,bib_intro,bib_sample}

\appendices

\section{Analytic Computation of the 1D Green's Function $\psi$} 
\label{App:psi}

\noindent

The Green's function $\psi(x,s)$ of the following BVP:
\begin{equation*}
\begin{cases}
-\frac{1}{2k}( y'' - \lambda^2 y) = f(x) \\
y(-\frac{L}{2}) = y(\frac{L}{2}) = 0 \\
-\frac{L}{2} \leq x \leq \frac{L}{2}
\end{cases}
\end{equation*}
takes the form:
\begin{equation*}
\psi(x,s) = \begin{cases}
a(s) e^{-\lambda x} + b(s) e^{\lambda x},~ x<s \\
c(s) e^{-\lambda x} + d(s) e^{\lambda x},~ x>s
\end{cases}
\end{equation*}
The first condition that $\psi(x,s)$ has to satisfy is 
$\psi(-\frac{L}{2},s)=0$, which gives:
\begin{equation}
b(s) = - a(s) e^{\lambda L}
\end{equation}
The second condition is $\psi(\frac{L}{2},s)=0$, which gives:
\begin{equation}
d(s) = - c(s) e^{-\lambda L}
\end{equation}
The third condition comes from the continuity of $\psi(x,s)$ at $x=s$:
\begin{equation}
a(s) (e^{-\lambda s} - e^{\lambda L} e^{\lambda s}) = 
c(s) (e^{-\lambda s} - e^{-\lambda L} e^{\lambda s})
\label{eq:cts}
\end{equation}
and the fourth is the differentiability condition at $x=s$:
\begin{equation}
a(s) (e^{-\lambda s} + e^{\lambda L} e^{\lambda s}) = 
c(s) (e^{-\lambda s} + e^{-\lambda L} e^{\lambda s}) - \frac{2k}{\lambda}
\label{eq:diff}
\end{equation}
Adding (\ref{eq:cts}) and (\ref{eq:diff}) gives:
\begin{equation*}
c(s) = a(s) + \frac{k}{\lambda} e^{\lambda s}
\end{equation*}
and, in addition, subtracting (\ref{eq:diff}) from (\ref{eq:cts}) gives:
\begin{equation*}
\begin{aligned}
a(s) &= K (e^{-\lambda s} -  e^{\lambda s} e^{-\lambda L}) \\
c(s) &= K (e^{-\lambda s} -  e^{\lambda s} e^{\lambda L})
\end{aligned}
\end{equation*}
where 
\begin{equation*}
K = -\frac{k}{\lambda} \frac{1}{e^{\lambda L}- e^{-\lambda L}}
\end{equation*}
Therefore, the Green's function $\psi(x,s)$ takes the form 
\begin{equation*}
\psi(x,s) = \begin{cases}
K (e^{-\lambda s} -  e^{\lambda s} e^{-\lambda L})
(e^{-\lambda x} -  e^{\lambda x} e^{\lambda L}),~ x<s \\
K (e^{-\lambda s} -  e^{\lambda s} e^{\lambda L})
(e^{-\lambda x} -  e^{\lambda x} e^{-\lambda L}),~ x>s
\end{cases}
\end{equation*}
which can be equivalently written 
(by multiplying by $e^{\lambda \frac L 2}e^{-\lambda \frac L 2}$) as 
\begin{equation*}
\psi(x,s) = \begin{cases}
K \sigma_m(s) \sigma_p(x),~ x<s \\
K \sigma_p(s) \sigma_m(x),~ x>s
\end{cases}
\end{equation*}
where 
\begin{equation*}
\begin{aligned}
\sigma_m(z) = 2 \sinh \pbra{\lambda(z-\frac L 2)},\
\sigma_p(z) = 2 \sinh \pbra{\lambda(z+\frac L 2)}
\end{aligned}
\end{equation*}
As a final note, it is clear that $\psi(x,s)$ satisfies the symmetry condition:
\begin{equation*}
\psi(x,s) = \psi(s,x).
\end{equation*}


\begin{IEEEbiography}[{\includegraphics[width=1in,height=1.25in,clip,keepaspectratio]
{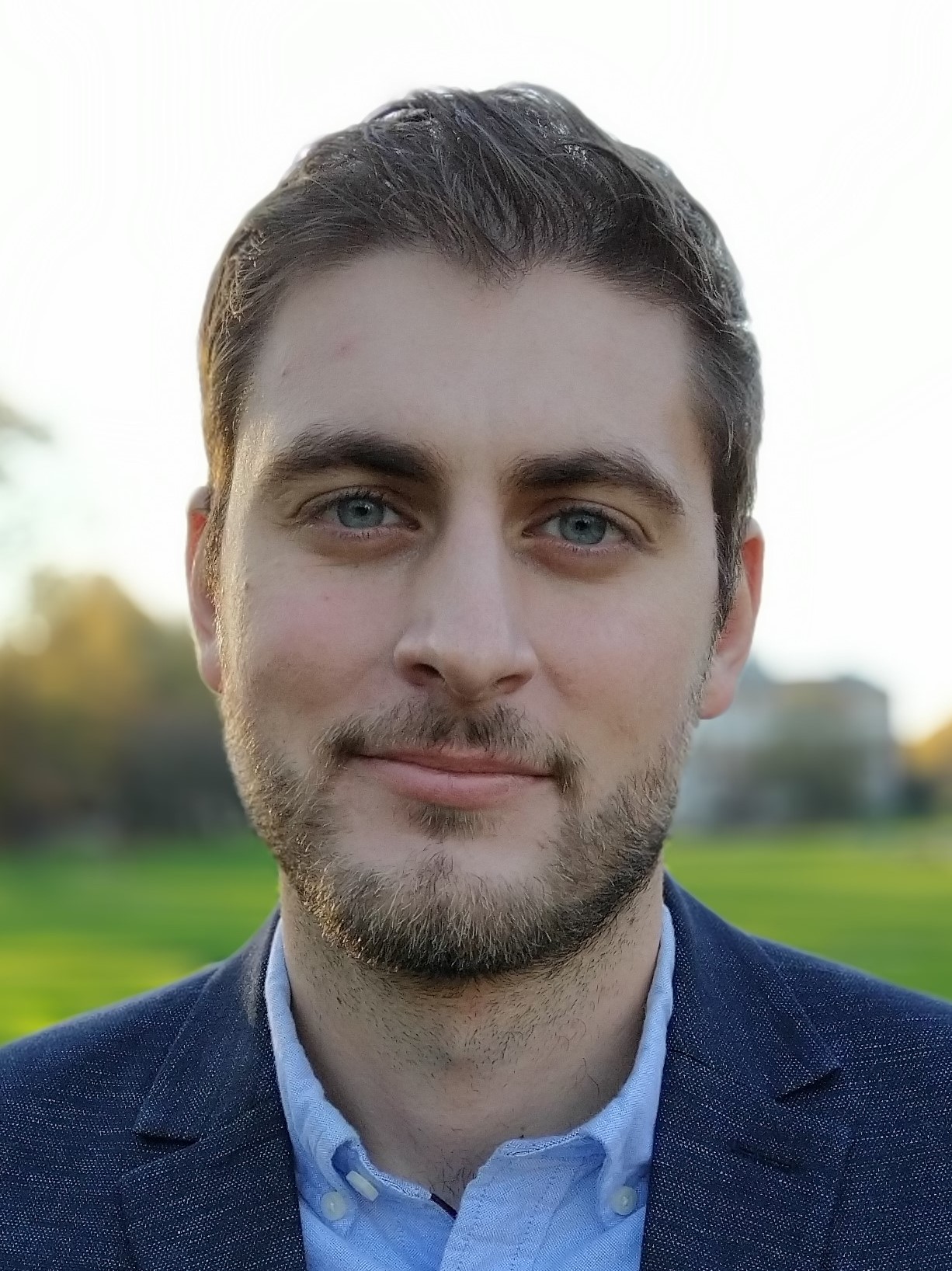}}]{Christos N. Mavridis} (M'20) 
received the Diploma degree in electrical and computer engineering from the National Technical University of Athens, Greece, in 2017,
and the M.S. and  Ph.D. degrees in electrical and computer engineering at the University of Maryland, College Park, MD, USA, in 2021. 
His research interests include learning theory, stochastic optimization, systems and control theory, multi-agent systems, and robotics. 

He has worked as a researcher at the Department of Electrical and Computer Engineering at the University of Maryland, College Park, MD, USA, and as a research intern for the Math and Algorithms Research Group at Nokia Bell Labs, NJ, USA, and the System Sciences Lab at Xerox Palo Alto Research Center (PARC), CA, USA. 

Dr. Mavridis is an IEEE member, and a member of the Institute for Systems Research (ISR) and the Autonomy, Robotics and Cognition (ARC) Lab. He received the Ann G. Wylie Dissertation Fellowship in 2021, and the A. James Clark School of Engineering Distinguished Graduate Fellowship, Outstanding Graduate Research Assistant Award, and Future Faculty Fellowship, in 2017, 2020, and 2021, respectively. He has been a finalist in the Qualcomm Innovation Fellowship US, San Diego, CA, 2018, and he has received the Best Student Paper Award (1st place) in the IEEE International Conference on Intelligent Transportation Systems (ITSC), 2021.
\end{IEEEbiography}

\begin{IEEEbiography}[{\includegraphics[width=1in,height=1.25in,clip,keepaspectratio]
{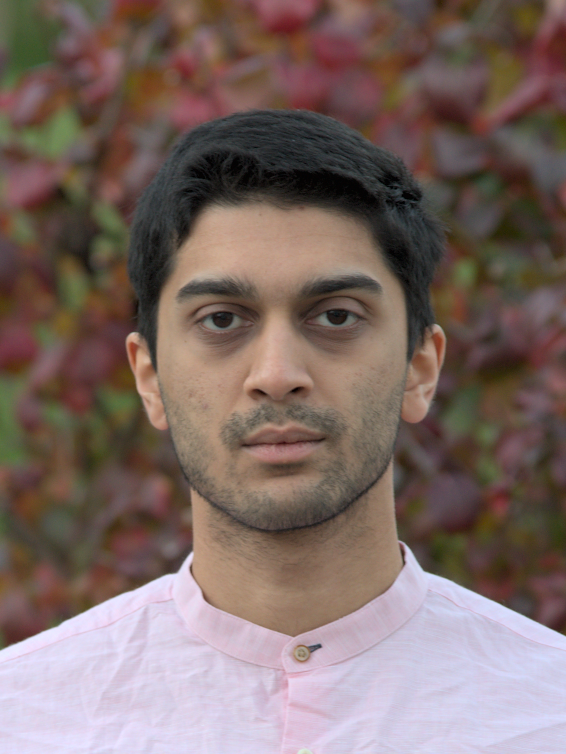}}]{Amoolya Tirumalai} (M'19) received the Bachelor of Science degree in 
biomedical engineering from the Georgia Institute of Technology, Atlanta, GA, USA, in 2018. 
He is currently pursuing his Ph.D. degree in electrical and computer engineering at the 
University of Maryland, College Park, MD, USA. His interests lie in
optimization and optimal control theory, distributed parameter systems, collective motion, and cyber-physical systems.

Beginning in 2019, Mr. Tirumalai has worked as a research assistant in the Department of Electrical and Computer Engineering at the University of Maryland. He previously was an Associate in Research at the Department of Biological Sciences at Duke University, Durham, NC, USA. 

Mr. Tirumalai is a member of the Institute for Systems Research. He is a 2019 recipient of the
Clark Doctoral Fellowship from the A. James and Alice B. Clark Foundation and the University of Maryland.
\end{IEEEbiography}

\begin{IEEEbiography}[{\includegraphics[width=1in,height=1.25in,clip,keepaspectratio]
{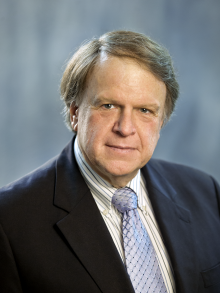}}]{John S. Baras} (F'13) 
received the Diploma degree in electrical and mechanical engineering from the National Technical University of Athens, Athens, Greece, in 1970, and the M.S. and Ph.D. degrees in applied mathematics from Harvard University, Cambridge, MA, USA, in 1971 and 1973, respectively.

He is a Distinguished University Professor and holds the Lockheed Martin Chair in Systems Engineering, with the Department of Electrical and Computer Engineering and the Institute for Systems Research (ISR), at the University of Maryland College Park. From 1985 to 1991, he was the Founding Director of the ISR. Since 1992, he has been the Director of the Maryland Center for Hybrid Networks (HYNET), which he co-founded. His research interests include systems and control, optimization, communication networks, applied mathematics, machine learning, artificial intelligence, signal processing, robotics, computing systems, security, trust, systems biology, healthcare systems, model-based systems engineering.

Dr. Baras is a Fellow of IEEE (Life), SIAM, AAAS, NAI, IFAC, AMS, AIAA, Member of the National Academy of Inventors and a Foreign Member of the Royal Swedish Academy of Engineering Sciences. Major honors include the 1980 George Axelby Award from the IEEE Control Systems Society, the 2006 Leonard Abraham Prize from the IEEE Communications Society, the 2017 IEEE Simon Ramo Medal, the 2017 AACC Richard E. Bellman Control Heritage Award, the 2018 AIAA Aerospace Communications Award. In 2016 he was inducted in the A. J. Clark School of Engineering Innovation Hall of Fame. In 2018 he was awarded a Doctorate Honoris Causa by his alma mater the National Technical University of Athens, Greece.   
\end{IEEEbiography}

\end{document}